\relax

\newif\iflong
\newif\ifshort

\longtrue

\iflong
\else
\shorttrue
\fi

\documentclass[letterpaper]{article} \usepackage{aaai23}  \usepackage{times}  \usepackage{helvet}  \usepackage{courier}  \usepackage[hyphens]{url}  \usepackage{graphicx} \urlstyle{rm}   \usepackage{natbib}  \usepackage{caption} \DeclareCaptionStyle{ruled}{labelfont=normalfont,labelsep=colon,strut=off} \frenchspacing  \setlength{\pdfpagewidth}{8.5in}  \setlength{\pdfpageheight}{11in}  \usepackage{algorithm}
\usepackage{algorithmic}

\usepackage{newfloat}
\usepackage{listings}
\lstset{	basicstyle={\footnotesize\ttfamily},	numbers=left,numberstyle=\footnotesize,xleftmargin=2em,	aboveskip=0pt,belowskip=0pt,	showstringspaces=false,tabsize=2,breaklines=true}
\floatstyle{ruled}
\newfloat{listing}{tb}{lst}{}
\floatname{listing}{Listing}
\pdfinfo{
/Title (A Structural Complexity Analysis of Synchronous Dynamical Systems)
/Author (Eduard Eiben, Robert Ganian, Thekla Hamm, Viktoriia Korchemna)
/TemplateVersion (2022.1)
}

\setcounter{secnumdepth}{0}

\usepackage{thm-restate}
\usepackage{amsmath}
\usepackage{amssymb}
\usepackage{amsthm}
\usepackage{xspace}
\usepackage{bm}
\usepackage{complexity}
\usepackage[textwidth=16mm,textsize=footnotesize]{todonotes}
\setlength{\marginparwidth}{14mm}
\usepackage{tikz}
\usetikzlibrary{shapes,fit,hobby,backgrounds,positioning,calc,trees,decorations.pathreplacing}
\usepackage{boxedminipage}
\usepackage{cleveref}

\usepackage{booktabs}

\newcommand{\N}{\ensuremath{\mathbb{N}}}
\newcommand{\Nat}{\N}

\newcommand{\td}{\ensuremath{\operatorname{td}}\xspace}  

\newcommand{\bigoh}{\ensuremath{\mathcal{O}}}

\newcommand{\III}{\mathcal{I}}

\newcommand{\SSS}{\mathcal{S}}

\newcommand{\XXX}{\mathcal{X}}

\newtheorem{theorem}{Theorem}
\crefname{theorem}{Theorem}{Theorems}

\crefname{observation}{Observation}{Observations}
\newtheorem{lemma}[theorem]{Lemma}
\crefname{lemma}{Lemma}{Lemmas}
\newtheorem{corollary}[theorem]{Corollary}
\crefname{corollary}{Corollary}{Corollaries}

\crefname{proposition}{Proposition}{Propositions}

\crefname{conjecture}{Conjecture}{Conjectures}
\newtheorem{claim}{Claim}
\newtheorem{definition}[theorem]{Definition}
\crefname{claim}{Claim}{Claims}
\newenvironment{claimproof}[1]{\par\noindent\emph{Proof of the Claim.}\space#1}{\hfill $\blacksquare$}
\newenvironment{claimproofsketch}[1]{\par\noindent\emph{Proof of the Claim (Sketch).}\space#1}{\hfill $\blacksquare$}

\theoremstyle{remark}

\crefname{example}{Example}{Examples}

\newcommand{\dom}{\ensuremath{\mathbb{B}}\xspace}
\newcommand{\innbhd}[1]{\ensuremath{\delta^-(#1)}}
\newcommand{\SDSys}{SyDS\xspace}
\newcommand{\reach}{\textsc{Reachability}}
\newcommand{\conv}{\textsc{Convergence}}
\newcommand{\QBF}{\textsc{QBF}}
\newcommand{\allconv}{\textsc{Convergence Guarantee}}

\newcommand{\pervec}{\operatorname{pv}}

\begin{document}

\title{
A Structural Complexity Analysis of Synchronous Dynamical Systems
}
\author {
    Eduard Eiben\textsuperscript{\rm 1}, Robert Ganian\textsuperscript{\rm 2}, Thekla Hamm\textsuperscript{\rm 3}, Viktoriia Korchemna\textsuperscript{\rm 2}    
}
\affiliations {
    \textsuperscript{\rm 1} Department of Computer Science, Royal Holloway, University of London, UK \\
     \textsuperscript{\rm 2} Algorithms and Complexity Group, TU Wien, Austria \\
    \textsuperscript{\rm 3} Eindhoven University of Technology, the Netherlands \\
    Eduard.Eiben@rhul.ac.uk, rganian@gmail.com, vkorchemna@ac.wien.ac.at,  thekla.hamm@gmail.com
}

\maketitle

\begin{abstract}
Synchronous dynamic systems are well-established models that have been used to capture a range of phenomena in networks, including opinion diffusion, spread of disease and product adoption.
We study the three most notable problems in synchronous dynamic systems: whether the system will transition to a target configuration from a starting configuration, whether the system will reach convergence from a starting configuration, and whether the system is guaranteed to converge from every possible starting configuration. While all three problems were known to be intractable in the classical sense, we initiate the study of their exact boundaries of tractability from the perspective of structural parameters of the network by making use of the more fine-grained parameterized complexity paradigm. 

As our first result, we consider treewidth---as the most prominent and ubiquitous structural parameter---and show that all three problems remain intractable even on instances of constant treewidth. We complement this negative finding with fixed-parameter algorithms for the former two problems parameterized by treedepth, a well-studied restriction of treewidth. While it is possible to rule out a similar algorithm for convergence guarantee under treedepth, we conclude with a fixed-parameter algorithm for this last problem when parameterized by treedepth and the maximum in-degree.
\end{abstract}

\section{Introduction}

Synchronous dynamic systems are a well-studied model used to capture a range of diffusion phenomena in networks~\cite{RosenkrantzMRS21,ChistikovLPT20}.
Such systems have been used, e.g., in the context of social contagions (e.g., the spread of information, opinions, fads, epidemics) as well as product adoption~\cite{AdigaKMMRV18,GuptaSMVLL18,OgiharaU17}. 

Informally, a synchronous dynamic system (SyDS) consists of a directed graph $G$ (representing an underlying network) with each node $v$ having a local function $f_v$ and containing a state value from a domain $\mathbb{B}$, which may evolve over discrete time steps\footnote{Formal definitions are provided in the Preliminaries.}. While each node $v$ begins with an initial state value at time $0$, at each subsequent time step it receives an updated value by invoking the node's local function $f_v$ on the state value of $v$ and of all nodes with arcs to $v$ (i.e., the closed in-neighborhood of $v$).
In line with recent works~\cite{RosenkrantzMRS21,ChistikovLPT20}, here we focus our attention to the Boolean-domain case with deterministic functions, which is already sufficiently rich to model a variety of situations. SyDS with Boolean domains are sometimes also called synchronous Boolean networks, especially in the context of systems biology~\cite{OgiharaU20,AkutsuCN07,KaufmannPST03}.

A central notion in the context of SyDS is that of a \emph{configuration}, which is a tuple specifying the state of each node at a certain time step. In several use cases of SyDS, there are clearly identifiable configurations that are either highly desirable (e.g., when dealing with information propagation), or highly undesirable (when modeling the spread of a disease or computer virus). Indeed, the \reach\ problem~\cite{RosenkrantzMRS21}---deciding whether a given target configuration will be reached from a given starting configuration---is a classical computational problem on SyDS~\cite{OgiharaU17,AkutsuCN07}. 

In other settings such as in opinion diffusion~\cite{AulettaFG18,AulettaCFGP17}, we do not ask for the reachability of a specific configuration, but rather whether the system eventually converges into a \emph{fixed point}, i.e., a configuration that transitions into itself. This idea has led to the study of two different problems on SyDS~\cite{ChistikovLPT20}: in \conv\ we ask whether the system converges (to an arbitrary fixed point) from a given starting configuration, while in \allconv\ we ask for a much stronger property---notably, whether the system converges from all possible configurations. 

In view of the fundamental nature of these three problems, it is somewhat surprising that so little is known about their complexity. The \PSPACE-completeness of \conv\ and \allconv\ has been established two years ago~\cite{ChistikovLPT20}, while the \PSPACE-completeness of \reach\ on directed acyclic networks was established even more recently~\cite{RosenkrantzMRS21}. Earlier, Barrett et al.~\cite{BarrettHMRRS06} established the \PSPACE-completeness of \reach\ on general directed networks of bounded treewidth and degree, albeit the bounds obtained in that work are very large.
In spite of these advances, we still lack a detailed understanding of the complexity of fundamental problems on SyDS.

\smallskip
\noindent \textbf{Contribution.}\quad
Since \reach, \conv\ and \allconv\ are all computationally intractable on general SyDS, it is natural to ask whether this barrier can be overcome by exploiting the structural properties of the input network. In this paper, we investigate these three problems through the lens of \emph{parameterized complexity}~\cite{DowneyF13,CyganFKLMPPS15}, a computational paradigm which offers a refined view into the complexity-theoretic behavior of problems. In this setting, we associate each input $\mathcal{I}$ with a numerical parameter $k$ that captures a certain property of $\mathcal{I}$, and ask whether there is an algorithm that can solve such inputs in time $f(k)\cdot |\mathcal{I}|^{\bigoh(1)}$ (for some computable function $f$)---if yes, the problem is called \emph{fixed-parameter tractable}, and the class \FPT\ of all such problems is the central notion of tractability in the parameterized setting. 

We begin our investigation by considering the most widely studied and prominent graph parameter, notably \emph{treewidth}~\cite{RobertsonS84}. Treewidth intuitively captures how ``tree-like'' a network is, and it is worth noting that in addition to its fundamental nature, the structure of real-world networks has been demonstrated to attain low treewidth in several settings~\cite{ManiuSJ19}. While \allconv\ was already known to be intractable even on networks of constant treewidth~\cite[Theorem 5.1]{RosenkrantzMRS21}\footnote{In fact, the problem can be shown to be intractable even on stars via a simple argument.}, previous reductions for \reach\ and \conv\ only apply to networks of high treewidth~\cite{ChistikovLPT20}. Here, we show: 

\ifshort
\begin{restatable}{theorem}{thmpspace}
\label{thm: pspace}
\reach\ and \conv\ are \PSPACE-complete, even on \SDSys{s} of treewidth \(2\) and maximum in-degree \(3\).
\end{restatable}
\fi

\iflong
\begin{restatable}{theorem}{thmpspace}
\label{thm: pspace}
\reach\ and \conv\ are \PSPACE-complete, even on \SDSys{s} of treewidth \(2\) and maximum in-degree \(3\).
\end{restatable}
\fi

The main technical contribution within the proof is the construction of a non-trivial counter which can loop over all exponentially many configurations of a set of nodes and whose structure is nothing more than a directed path. We believe the existence of such a counter is surprising and may be of independent interest; it contrasts previous counter constructions which relied on much denser connections between the nodes, but interestingly its simple structure comes at the cost of the configuration loop generated by the counter being highly opaque.

Intractability w.r.t.\ treewidth draws a parallel to the complexity behavior of the classical \QBF\ problem---an archetypical \PSPACE-complete problem which also remains \PSPACE-complete on instances with bounded treewidth~\cite{AtseriasO14}. In fact, while being based on entirely different ideas, our reduction and Atserias' and Oliva's construction for \QBF\ also show intractability for a slight restriction of treewidth called \emph{pathwidth}, but do not exclude tractability under a related parameter \emph{treedepth}~\cite{sparsity}. Investigating the complexity of our problems under the parameter treedepth is the natural next choice, not only because it lies at the very boundary of intractability, but also because of its successful applications for a variety of other problems~\cite{GanianPSS20,GanianO18,GutinJW16} and its close connection to the maximum path length in the network\footnote{A class of networks has bounded treedepth if and only if there is a bound on the length of any undirected path.}.
While the complexity of \QBF\ parameterized by treedepth remains a prominent open problem, as our second main technical contribution we show:

\begin{restatable}{theorem}{thmtreedepth}
\label{thm: treedepth}
\reach\ and \conv\ are fixed-parameter tractable when parameterized by the treedepth of the network. 
\end{restatable}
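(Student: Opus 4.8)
The plan is to establish a purely structural bound showing that the dynamics of any \SDSys{} on a network of treedepth $d$ must stabilise quickly, and then to solve both problems by naive simulation. Concretely, I will prove the following key lemma: there is a function $h$ such that, for every \SDSys{} whose network has treedepth at most $d$ and every starting configuration, both the number of steps before the trajectory becomes periodic and the length of its eventual period are at most $h(d)$. Given this lemma, both algorithms are immediate. First compute a treedepth decomposition of the network (which is fixed-parameter tractable in $d$), and then iterate the global transition map $\Phi$ for $2h(d)$ steps, recording the sequence of configurations; since each step costs only \Oh{n+m} time, the whole simulation runs in time $h(d)\cdot\Oh{n+m}$. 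For \reach\ we answer \Yes\ exactly if the target configuration occurs among these configurations, which is correct because every reachable configuration appears within the first $\mathrm{tail}+\mathrm{period}\le 2h(d)$ steps. For \conv\ we answer \Yes\ exactly if one of these configurations is a fixed point of $\Phi$; this is correct because after $h(d)$ steps the trajectory has already entered its limit cycle, and the system converges if and only if that cycle is a single fixed point.

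The proof of the lemma is by induction on the treedepth, carried out on a rooted treedepth decomposition $F$ and generalised so that the induction hypothesis also covers \emph{driven} subsystems, i.e.\ subsystems whose vertices may additionally read a bounded number of ``input'' bits supplied by an externally prescribed, eventually periodic sequence (these inputs model the states of the ancestors lying above the subtree currently under consideration). I will first record two elementary facts. (i) Since every arc of $G$ joins an ancestor--descendant pair of $F$, deleting the roots $R$ of $F$ splits the network into the subsystems living in the subtrees, which interact with one another only through the states of $R$; hence, for a fixed trajectory of $R$, these subsystems evolve independently and each has treedepth at most $d-1$. (ii) If two sibling subtrees are isomorphic as rooted subgraphs together with their local functions, their initial states, and the way their arcs attach to the common ancestors, then they receive identical inputs at every step and therefore remain synchronised forever; consequently the root above them interacts with only a bounded number of \emph{distinct} subtree behaviours rather than with all of its descendants individually.

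For the inductive step I consider one root $r$ together with its child subtrees. By the induction hypothesis every subtree, viewed as a driven subsystem fed by the eventually periodic trajectory of $r$, settles into a limit cycle whose length divides $p_r\cdot h(d-1)$, where $p_r$ is the eventual period of $r$; combining the finitely many distinct subtree behaviours from fact (ii) via the observation that a least common multiple of divisors of $N$ again divides $N$ keeps all these periods controlled. The remaining, and genuinely delicate, task is to bound the period $p_r$ of the single root bit itself, since $r$ both drives and is driven by its subtrees. Here I will exploit that treedepth $d$ bounds the length of every undirected path in $G$ by $2^{d}$: any feedback signal returning to $r$ through its subtrees does so after a delay bounded by this path length, so the induced update of $r$ is a Boolean recurrence of bounded memory over the boundedly many synchronised subtree behaviours, and such a recurrence has period and tail bounded by a function of $d$. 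Propagating these bounds up the at most $d$ levels of $F$ and taking least common multiples yields the desired $h(d)$, and an analogous accounting bounds the tail. The main obstacle is exactly this treatment of the root--subtree feedback: the naive ``assume $r$ is periodic, deduce the subtrees are periodic, deduce $r$ is periodic'' argument is circular and does \emph{not} by itself bound $p_r$, and it is the bounded-path-length (hence bounded-delay) structure of treedepth, rather than treewidth, that breaks the circularity and ultimately separates this tractable regime from the \PSPACE-completeness of \Cref{thm: pspace}.
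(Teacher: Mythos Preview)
Your high-level strategy---prove that the tail and period of every trajectory are bounded by a function of the treedepth, then simulate---would indeed yield the theorem, and the period bound you state is in fact true (it is a consequence of the paper's argument). However, your proposed proof of that bound contains a genuine gap.

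The problematic step is the appeal to fact~(ii) to conclude that the root interacts with only ``boundedly many distinct subtree behaviours.'' Fact~(ii) is correct, but its conclusion does not follow: the number of isomorphism types among the child subtrees of $r$ is \emph{not} bounded by any function of the treedepth~$d$. A root can have arbitrarily many children whose subtrees are pairwise non-isomorphic (different sizes, different local functions, different attachment patterns), so identifying identical siblings need not reduce the number of relevant subtrees below $\Theta(n)$. Your attempted rescue via bounded path length is also incorrect: the bound $2^{d}$ on simple undirected paths does \emph{not} translate into a bounded-memory recurrence for $r$, because each subtree carries internal state whose evolution depends on its entire history, not just a window of the last $2^{d}$ values of $r$. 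Indeed, the path gadget of \Cref{thm:paths} already realises a subtree of treedepth $\Theta(\log n)$ whose period is $2^{\Theta(n)}$, so the feedback $r$ receives from a single subtree is emphatically not a short-memory Boolean recurrence in $r$'s own trajectory.

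What is missing is precisely the mechanism the paper uses: apply the type-merging of fact~(ii) \emph{iteratively}, bottom-up along the treedepth decomposition. At the leaf level every subtree has size~$1$, so the number of types there is bounded; after merging, every depth-$(L-1)$ subtree has bounded size, hence bounded number of types; and so on. Only after this inductive compression do the child subtrees of any node have size---and therefore number of types---bounded by a function of~$d$. The paper packages this as a kernelisation (\Cref{lem: one_per_type} and \Cref{cor: kernel}) producing an equivalent instance on at most $h(L)$ nodes, after which one simulates $|\dom|^{h(L)}$ steps. Your fact~(ii) is exactly the right observation, but it must be applied level by level rather than once at the root; your induction hypothesis, being about periods of driven subsystems rather than about bounded-size replacements for them, does not supply what the inductive step needs.
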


The main idea behind the proof of Theorem~\ref{thm: treedepth} is to show that the total periodicity of the configurations is upper-bounded by a function of the treedepth, and this fact then enables us to argue the correctness of an iterative pruning step that allows us to gradually reduce the instance to an equivalent one of bounded size. As for the third problem (\allconv), fixed-parameter tractability w.r.t.\ treedepth is excluded by the intractability of the problem on stars.

While these results already provide a fairly tight understanding of the complexity landscape for two out of the three studied problems, they do raise the question of which structural properties of the network can guarantee the tractability of \allconv. Intuitively, one of the main difficulties when dealing with \allconv\ is that it is not even possible to enumerate all possible starting configurations of the network. Yet, in spite of this seemingly critical problem, we conclude by establishing fixed-parameter tractability of \allconv\ when parameterized by treedepth plus the maximum in-degree:

\begin{restatable}{theorem}{thmtreedepthindeg}
\label{thm: treedepth_indeg}
\allconv\ is fixed-parameter tractable when parameterized by the treedepth plus the maximum in-degree of the network. 
\end{restatable}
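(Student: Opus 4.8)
The plan is to reduce \allconv\ to a family of bounded-size constraint satisfaction instances whose interaction graphs inherit bounded width from the network, leveraging the periodicity bound already proved for Theorem~\ref{thm: treedepth}. Write $d$ for the treedepth and $\Delta$ for the maximum in-degree of the network, and let $F$ denote the global transition function mapping each configuration to its successor. Since a deterministic system over a finite state space is eventually periodic, \allconv\ holds if and only if every configuration lying on a cycle of $F$ is a fixed point; equivalently, \allconv\ \emph{fails} if and only if some configuration lies on a cycle of length at least $2$. The proof of Theorem~\ref{thm: treedepth} shows that the length of every cycle of $F$ is bounded by a computable function $p(d)$ of the treedepth alone---a bound that provably fails for treewidth, as witnessed by the exponential counter underlying Theorem~\ref{thm: pspace}. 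Consequently, \allconv\ fails if and only if there are an integer $\ell$ with $2 \le \ell \le p(d)$ and a configuration $c$ with $F^\ell(c)=c$ but $F(c)\neq c$, and it suffices to test this for each of the parameter-many values of $\ell$ separately.

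First I would, for a fixed $\ell$, encode the existence of such a $c$ as a constraint satisfaction instance over \emph{temporal profiles}. For each node $v$ introduce a variable $\pi_v$ ranging over $\dom^\ell$, intended to record the states $c[v],F(c)[v],\dots,F^{\ell-1}(c)[v]$. The update rule becomes, for every node $v$ and every time step $t$, the local constraint $\pi_v[(t+1)\bmod \ell] = f_v\big(\pi_u[t] : u\in \innbhd{v}\cup\{v\}\big)$, whose scope is exactly the closed in-neighborhood of $v$ and hence has size at most $\Delta+1$. Every solution corresponds to a trajectory with $F^\ell(c)=c$, and the requirement $F(c)\neq c$ is equivalent to insisting that at least one profile be non-constant as a cyclic word. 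Since each $\pi_v$ has domain $\dom^\ell$ of size $2^\ell\le 2^{p(d)}$, the domain size is bounded by a function of the parameters.

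The key structural step is to control the interaction graph of this instance, namely the graph $G'$ obtained from the underlying undirected network by turning each closed in-neighborhood into a clique; I would bound $\tw(G')$ by $d(\Delta+1)$. Starting from an elimination forest of depth $d$, take the standard tree decomposition whose bag at a node $x$ consists of $x$ together with its ancestors. Every edge of the network already joins two comparable vertices, so only the freshly added clique edges need attention, and such an edge joins two in-neighbors $u,u'$ of a common node $v$, each of which is comparable to $v$. For every node $v$ and every in-neighbor $u$ that is a descendant of $v$, I would add $u$ to all bags on the forest path from $u$ up to $v$; this keeps the bags of each vertex connected and forces all of $\innbhd{v}\cup\{v\}$ into the bag of $v$, thereby covering every new clique edge. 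Each bag then grows only by descendant in-neighbors of its at most $d$ ancestors, at most $\Delta$ per ancestor, so bag sizes stay bounded by $d(\Delta+1)$.

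Finally, with the interaction graph $G'$ of bounded treewidth, bounded-arity constraints, and bounded domain, the instance is solved by a routine dynamic program along a tree decomposition of $G'$, augmenting the table with a single bit recording whether a non-constant profile has already been committed; iterating over the at most $p(d)$ values of $\ell$ then yields the claimed fixed-parameter algorithm. The \emph{main obstacle} is the structural step bounding $\tw(G')$: this is precisely where the in-degree bound is indispensable, and its necessity is mirrored by the intractability of \allconv\ on stars, whose center may have unbounded in-degree and whose in-neighborhood clique would otherwise be arbitrarily large.
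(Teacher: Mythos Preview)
Your proof is correct, but it takes a substantially different route from the paper's.

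The paper's argument is considerably more elementary: it observes that convergence is decided \emph{locally} at each node. For every node $v$, let $X_v$ be the set of nodes admitting a directed path to $v$; the trajectory of $v$ depends only on the initial configuration restricted to $X_v$, so the whole system converges if and only if each induced subsystem on $X_v$ does. With maximum in-degree $\Delta$ and maximum directed-path length $p$, one has $|X_v|\le 1+\Delta+\cdots+\Delta^{p}$, and since a graph of treedepth $d$ contains no long path, $p$ is bounded by a function of $d$. One then solves \allconv\ by branching, for each $v$, over all $|\dom|^{|X_v|}$ starting configurations on $X_v$ and simulating $|\dom|^{|X_v|}$ steps. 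No period bound from Theorem~\ref{thm: treedepth}, no CSP encoding, and no treewidth argument for a Gaifman graph are needed.

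Your approach is heavier but more structural. Its ingredients---the uniform period bound extracted from the compression behind Theorem~\ref{thm: treedepth} (which indeed holds for every starting configuration, since the size bound $h(L)$ on the compressed instance is independent of the start), the temporal-profile CSP with the ``some profile is non-constant'' bit, and especially the bound $\tw(G')\le d(\Delta+1)$ for the closed-in-neighbourhood clique completion of a depth-$d$ elimination forest---are each of independent interest and could transfer to questions that do not decompose node by node. The paper's route, in contrast, is shorter, fully self-contained, yields explicit running-time bounds, and as a by-product gives a faster \conv\ algorithm in the same regime. Neither approach extends to treewidth plus in-degree: the paper's because bounded-treewidth graphs can have arbitrarily long directed paths, and yours because the path gadget underlying Theorem~\ref{thm: pspace} shows the global period is not bounded by any function of treewidth.
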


Our results are summarized in Table~\ref{tab:results}. For completeness, we remark that the results are robust in terms of the type of functions that can be used---in particular, all our algorithmic results hold even if we assume that the functions are black-box oracles.
Moreover, in some settings it may be useful to ask whether a target configuration and/or convergence is reached up to an input-specified time point; incorporating this as an additional constraint leads to a strict generalization of \reach, \conv\ and \allconv, and all of the algorithms provided here can also directly solve these more general problems.

\begin{table*}[ht]
  \centering
  \begin{tabular}{ccccc}
    \toprule
    ~ & Unrestricted & Treewidth & Treedepth & Treedepth + in-degree\\ \midrule
    \reach & \PSPACE-c~\cite{RosenkrantzMRS21} & $\boldsymbol{\mathsf{PSPACE}}$\textbf{-c}$^\dagger$ & \boldsymbol{\mathsf{FPT}} & \FPT \\
    \conv & \PSPACE-c~\cite{ChistikovLPT20} & $\boldsymbol{\mathsf{PSPACE}}$\textbf{-c} & \boldsymbol{\mathsf{FPT}} & \FPT \\
    \allconv & \PSPACE-c~\cite{ChistikovLPT20} & co\NP-h& co\NP-h & \boldsymbol{\mathsf{FPT}}\\
    \bottomrule
\end{tabular}
\caption{Summary of our main results (marked in bold). These include (1) the \PSPACE-completeness of the former two problems on networks of constant treewidth, (2) their fixed-parameter tractability with respect to the parameter treedepth, and (3) a fixed-parameter algorithm for \allconv\ when parameterized by treedepth plus the in-degree of the network. The co\NP-hardness of the latter problem on networks of constant treedepth and treewidth follows from previous work~\cite[Theorem 5.1]{RosenkrantzMRS21}).\\
$^\dagger$ The \PSPACE-completeness of \reach\ on inputs of bounded treewidth was already shown by Barrett et al.~\cite{BarrettHMRRS06}, albeit the constants used in that reduction were very large while here we establish intractability for treewidth~$2$.
}

\label{tab:results}
\end{table*}

\section{Preliminaries}
We use standard terminology and notation for directed simple graphs~\cite{Diestel12}, which will serve as models for the networks considered in this paper. 
We use \(\innbhd{v}\) to denote the \emph{in-neighbourhood} of a node \(v\) in a directed graph, i.e., the set of all nodes \(w\) such that the graph contains an edge \(wv\) which starts in \(w\) and ends in \(v\). 

It will be useful for us to consider tuples as implicitly indexed.
This means, for two sets \(A\) and \(B\) we use \(B^A\) to denote the set of tuples with \(|A|\)-many entries, each of which is an element of \(B\) \emph{and} at the same time we associate each of the entries with a fixed element of \(A\).
For a tuple \(x \in B^A\) and an element \(a \in A\), we denote by \(x_a\) the entry of \(x\) that is associated with \(a\).

\paragraph{Synchronous Dynamic Systems.}
A \emph{synchronous dynamic system} (\emph{\SDSys}) \(\SSS = (G,\dom,\{f_v \mid v \in V(G)\})\) consists of an \emph{underlying directed graph} (the \emph{network}) \(G\), a \emph{node state domain} \dom, and for each node \(v \in V(G)\) its \emph{local function} \(f_v \colon \dom^{\innbhd{v} \cup \{v\}} \to \dom\).
In line with previous literature, in this paper we will always consider \(\dom = \{0,1\}\) to be binary; however, all results presented herein can be straightforwardly generalized to any fixed (i.e., constant-size) domain.
A \emph{configuration} of a \SDSys is a tuple in \(\dom^{V(G)}\).
The \emph{successor} of a configuration \(x\) is the configuration \(y\) which is given by \(y_v = f_v(x_v)\) for every \(v \in V(G)\).
A configuration \(x\) is a \emph{fixed point} of a \SDSys if it is its own successor, i.e.\ for all \(v \in V(G)\), \(f_v(x_v) = x_v\).

Given two configurations \(x\) and \(y\) of a \SDSys \(\SSS\), we say \(y\) is \emph{reachable} from \(x\) if there is a sequence of configurations of \(\SSS\) starting in \(x\) and ending in \(y\) such that every configuration in the sequence is followed by its successor.

The three problems on synchronous dynamic systems considered in this paper are formalized as follows:
\begin{center}
	\begin{boxedminipage}{0.98 \columnwidth}
		\reach\\[5pt]
		\begin{tabular}{p{0.08 \columnwidth} p{0.8 \columnwidth}}
			Input: & A \SDSys \(\SSS = (G,\dom,\{f_v \mid v \in V(G)\})\), a configuration \(x\) of \(\SSS\) called \emph{starting configuration} and a configuration \(y\) of \(\SSS\) called \emph{target configuration}.\\
			Task: & Determine whether \(y\) is reachable from \(x\).
		\end{tabular}
	\end{boxedminipage}
\end{center}

\begin{center}
	\begin{boxedminipage}{0.98 \columnwidth}
		\conv\\[5pt]
		\begin{tabular}{p{0.08 \columnwidth} p{0.8 \columnwidth}}
			Input: & A \SDSys \(\SSS = (G,\dom,\{f_v \mid v \in V(G)\})\) and a configuration \(x\) of \(\SSS\) called \emph{starting configuration}.\\
			Task: & Determine whether there is a fixed point of \(\SSS\) that is reachable from \(x\).
		\end{tabular}
	\end{boxedminipage}
\end{center}

\begin{center}
	\begin{boxedminipage}{0.98 \columnwidth}
		\allconv\\[5pt]
		\begin{tabular}{p{0.08 \columnwidth} p{0.8 \columnwidth}}
			Input: & A \SDSys \(\SSS = (G,\dom,\{f_v \mid v \in V(G)\})\).\\
			Task: & Determine whether for every configuration \(x\) of \(\SSS\) there is a fixed point of \(\SSS\) that is reachable from \(x\).
		\end{tabular}
	\end{boxedminipage}
\end{center}

\paragraph{Treewidth and Treedepth.}
Similarly to other applications of treewidth on directed networks~\cite{GanianHO21,GuptaR0Z18a}, in this submission we consider the treewidth and treedepth of the \emph{underlying undirected graph}, which is the simple graph obtained by ignoring the orientations of all arcs in the graph. While directed analogues for treewidth have been considered in the literature, these have constant values on DAGs and hence cannot yield efficient algorithms for any of the considered problems~\cite{RosenkrantzMRS21}.

While treewidth has a rather technical definition involving bags and decompositions~\cite{RobertsonS84,DowneyF13}, for the purposes of this article it will suffice to remark that graphs where removing a single node results in a tree have treewidth at most $2$.

\smallskip
Treedepth is a parameter closely related to treewidth---in particular, the treedepth of a graph is lower-bounded by its treewidth. A useful way of thinking about graphs of bounded treedepth is that they are (sparse) graphs with no long paths. We formalize the parameter below.

A \emph{rooted forest} $\mathcal F$ is a disjoint union of rooted trees.
For a node~$x$ in a tree~$T$ 
of $\mathcal F$, the \emph{height} (or {\em depth})
of~$x$ in $\mathcal F$ is the number of nodes in the path from 
the root of~$T$ to~$x$.
The \emph{height of a rooted forest} is the maximum height of a node of the forest. 
Let $V(T)$ be the node set of any tree $T \in \mathcal F$.
\begin{definition}[Treedepth]\label{def:td}
  Let the \emph{closure} of a rooted forest~$\mathcal F$ be the graph
  $\lambda({\mathcal F})=(V_c,E_c)$ with the node set 
  $V_c=\bigcup_{T \in \mathcal F} V(T)$ and the edge set
  $E_c=\{xy \mid \text{$x$ is an ancestor of $y$ in some $T\in\mathcal F$}\}$.
  A \emph{treedepth decomposition}
  of a graph $G$ is a rooted forest $\mathcal F$ 
  such that $G \subseteq \lambda(\mathcal F)$.
  The \emph{treedepth} $\td(G)$ of a graph~$G$ is the minimum height of
  any treedepth decomposition of $G$.
\end{definition}

It is known that an optimal-width treedepth decomposition can be computed by a fixed-parameter algorithm~\cite{sparsity,ReidlRVS14,treedepthcomp} and also, e.g., via a SAT encoding~\cite{GanianLOS19}; hence, in our algorithms we assume that such a decomposition is provided on the input.

\section{The Path-Gadget and Hardness for Treewidth}
We provide a construction showing that even \SDSys{s} which are directed paths can reach an exponential number of configurations.
This functions as a crucial gadget for showing \PSPACE-hardness of \reach\ and \conv, but is also an interesting result in its own right as it significantly simplifies known constructions of \SDSys{s} with exponential periods~\cite{RosenkrantzMRS21}.

\iflong
\begin{theorem}
\fi
\ifshort
\begin{theorem}
\fi
\label{thm:paths}
For every $n\in \mathbb{N}$, there is a \SDSys \(\XXX^{(n)} = (G,\dom = \{0,1\},\{f_v \mid v \in V(G)\})\) and an initial configuration $x^0$, such that 
\begin{itemize}
\item $G$ is a directed path $v_1v_2\ldots v_{2n}$, 
\item $x^0$ is the all-zero configuration, 
\item the successor of $x^0$ is the configuration $x^1$ such that $x^1_{v_j}=1$ if and only if $j=1$ or $(j\mod 2)= 0$, and
\item for every $i\in [n+1]$ and every tuple $t\in \dom^{i}$, there exists $q\in \{0,\ldots, 2^i-1\}$ such that, for every $p\in \mathbb{N}$, the configuration of $\XXX^{(n)}$\ after $2^i\cdot p + q$ successor steps restricted to $v_1, v_2, v_4, \ldots, v_{2i-4}, v_{2i-2}$ is equal to $t$. 
\end{itemize}
In particular, for every $p\in \mathbb{N}$, after $2^i\cdot p$ steps, the node state of each of the nodes $v_1, v_2, v_4, \ldots, v_{2i-4}, v_{2i-2}$ is $0$.\end{theorem}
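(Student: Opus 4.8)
The plan is to give the construction explicitly and then prove the statement by an induction on prefixes. For the local functions I would set $f_{v_1}(a)=\neg a$ (so $v_1$ simply toggles); for every \emph{even} $j$ let $f_{v_j}(a,b)=\neg(a\oplus b)$ be the \textsc{xnor} of the in-neighbour state $a=x_{v_{j-1}}$ and the own state $b=x_{v_j}$; and for every \emph{odd} $j\ge 3$ let $f_{v_j}(a,b)=a\wedge\neg b$. The required transition $x^0\to x^1$ is immediate from evaluating these on the all-zero local inputs: $f_{v_1}(0)=1$, every even node gives $\neg(0\oplus0)=1$, and every odd node gives $0\wedge\neg 0=0$, matching the prescribed $x^1$. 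The one structural observation that makes the analysis tractable is that $\XXX^{(n)}$ is \emph{feed-forward}: each $f_{v_j}$ reads only $v_{j-1}$ and $v_j$, so for every $m$ the prefix $v_1\ldots v_m$ evolves as a closed sub-system, independent of the later nodes. Hence it suffices to understand the prefixes.

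I would then prove, by induction on $k$, a single invariant $\mathcal H(k)$ about the prefix $P_k=(v_1,\ldots,v_{2k})$ started from all-zero: (i) $P_k$ is purely periodic with period exactly $2^{k+1}$ and equals the all-zero state precisely at multiples of $2^{k+1}$; (ii) the top even node satisfies the anti-period relation $v_{2k}(t+2^{k})=\neg\,v_{2k}(t)$; and (iii) the tuple consisting of $v_1$ together with all even nodes of $P_k$ (that is $k+1$ nodes) runs through all $2^{k+1}$ values, each exactly once, over one period. Setting $k=i-1$, the observed set $\{v_1,v_2,v_4,\ldots,v_{2i-2}\}$ is exactly $v_1$ plus the even nodes of $P_{i-1}$, whose period by (i) is $2^{i}$; so (iii) yields the enumeration claim and (i) yields the ``in particular'' statement that all observed nodes are $0$ after $2^{i}p$ steps.

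The inductive step is clean once (ii) is available. Assuming $\mathcal H(k-1)$, the lower tuple repeats with period $2^{k}$ and, by (iii) for level $k-1$, is a bijection onto its values within that period; pairing each time $t$ with $t+2^{k}$, the lower tuple agrees while the new bit $v_{2k}$ flips by (ii), so the extended tuple hits each value $(w,0)$ and $(w,1)$ exactly once, giving (iii) for level $k$, and the anti-period forces the period of $P_k$ to be exactly $2^{k+1}$ and the all-zero recurrence of (i). It therefore remains to establish (ii). Because even nodes compute \textsc{xnor}, unrolling the recurrence over $2^{k}$ steps gives
\[
v_{2k}(t+2^{k}) \;=\; v_{2k}(t)\ \oplus\ \Big(\textstyle\sum_{s=t}^{\,t+2^{k}-1} v_{2k-1}(s)\ \bmod 2\Big),
\]
where the contribution of the constant $1$ in \textsc{xnor} vanishes since $2^{k}$ is even. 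Thus (ii) is \emph{equivalent} to the carry node $v_{2k-1}$ having an \emph{odd} number of $1$'s in every length-$2^{k}$ window.

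The hard part will be exactly this parity statement, which is where the opacity of the loop bites. Since $v_{2k-1}=a\wedge\neg b$ resets to $0$ whenever its driver $v_{2k-2}$ is $0$ and toggles while $v_{2k-2}$ is $1$, within one period its number of $1$'s equals $\sum_{\text{runs}}\lceil \ell_r/2\rceil$, summed over the maximal runs of consecutive $1$'s in $v_{2k-2}$ of lengths $\ell_r$; and $\lceil \ell/2\rceil$ is odd precisely when $\ell\equiv 1,2\pmod 4$. My plan is to control the multiset of these run lengths using the anti-periodic symmetry of $v_{2k-2}$ supplied by (ii) at the previous level — its pattern on the second half-period is the bitwise complement of the first — and to deduce that the count of runs of the ``odd'' lengths, and hence the whole sum, is odd. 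I expect this bookkeeping to be the main obstacle: it does not reduce to the linear relation $2O=(\text{\#ones of }v_{2k-2})+\sum_{v_{2k-2}=0}v_{2k-1}$ alone, and may well require strengthening the induction hypothesis to carry enough information about the run structure (or, equivalently, the descent pattern) of each even node, rather than only its anti-period.
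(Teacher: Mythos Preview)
Your construction and overall framework are exactly those of the paper: the same local functions (toggle at $v_1$, \textsc{xnor} at even indices, $a\wedge\neg b$ at odd indices $\ge 3$), the same feed-forward observation, and the same pivot on the anti-period relation $v_{2k}(t+2^{k})=\neg\,v_{2k}(t)$ for even nodes, from which the enumeration claim follows by the pairing argument you describe. So up to and including your reduction of (ii) to the parity of the number of $1$'s of $v_{2k-1}$ over a half-period, you are on the paper's track.

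The genuine gap is precisely where you flag it: you do not prove the parity statement, and you correctly suspect that the anti-period of $v_{2k-2}$ alone is not enough to control the run-length multiset of $v_{2k-2}$ (indeed, complementation pairs runs of $1$'s with runs of $0$'s, not with each other, so the symmetry does not directly constrain the parity of $\sum_r\lceil \ell_r/2\rceil$). Your instinct that the induction hypothesis must be strengthened to carry the run structure is exactly right, and this is what the paper does, but in a more concrete way than you outline. Rather than reasoning abstractly about run lengths, the paper shows that the full period vector of every node $v_j$ with $j\ge 9$ falls into one of four explicit shapes depending on $j\bmod 4$: for instance $\pervec_{4k}=(010)^\ell(110)^p(100)^q(10)$, $\pervec_{4k+1}=(001)^p(010)^q(01)$, and analogous forms for $4k+2$ and $4k+3$, with the exponents satisfying simple linear constraints. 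The inductive step then becomes a direct symbolic computation: feed the shape of $\pervec_j$ through $f_{j+1}$ starting from $x^0_{j+1}=0$, and read off the shape of $\pervec_{j+1}$; the anti-period for even $j$ (your (ii)) and the exact periodicity (your (i)) drop out of these explicit forms. The first twelve period vectors are computed by hand to seed the induction. In short, the ``strengthened hypothesis'' you anticipate is a complete finite description of the period vectors up to three integer parameters, and the proof of (ii) is then mechanical rather than a parity argument on runs.
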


\iflong
\begin{proof}
\fi
\ifshort
\begin{proof}[Proof Sketch]
\fi
Let us fix some $n\in \mathbb{N}$. For simplicity of notation, we will denote the local function $f_{v_j}$ for the node $v_j$ by $f_j$.
 The local function for the node $v_1$ is $f_{1} \colon \dom \to \dom$ given by $f_1(b) = 1-b$, that is the configuration alternates between $0$ and $1$. For every $i\in [n]$, the local function for the node $v_{2i}$ is $f_{2i} \colon \dom^2 \to \dom$ given by $f_{2i}(b_1, b_2) = (b_1-b_2+1)\cdot (b_2-b_1+1)$. Equivalently, $f_{2i}(b_1, b_2) = 1$ if and only if $b_1=b_2$ and $f_{2i}(b_1, b_2) = 0$ otherwise, and so the function is simply an evaluation of the equivalence relation. 
 Next, for every $i\in [n-1]$, the local function for the node $v_{2i+1}$ is $f_{2i+1} \colon \dom^2 \to \dom$ given by $f_{2i+1}(b_1, b_2) =b_1\cdot (1-b_2)$. Equivalently, the configuration on $v_{2i+1}$ is $1$ if and only if in the previous step, the configuration on $v_{2i+1}$ was $0$ and the configuration on $v_{2i}$ was $1$. 

This finishes the description of the \SDSys. We will now prove that the structure of configurations over the time steps has the desired properties. To this end, we denote the initial configuration $x^0$ and for configuration $x^i$, we denote its successor as $x^{i+1}$. Moreover, for simplicity of the notation, we denote by $x^i_j$ the state of the node $v_j$ in $i$-th step, that is $x^i_{v_j}$. 

Clearly, $x^i_1 = 0$ if $i \mod 2 = 0$ and $x^i_1 = 1$ if $i \mod 2 = 1$. 
Moreover, it is also easy to verify now that $x^1_j=1$ if and only if $j=1$ or $(j \mod 2)=0$ since for $j\ge 2$, we have by the definition of the local functions that $f_j(0,0)= 1$ if and only if $j$ is even. 
In order to complete the proof, we first establish the following claim.

\iflong
\begin{claim}
\fi
\ifshort
\begin{claim}[$\star$]
\fi
For all $j\ge 1$ and $i\in \mathbb{N}$, it holds that $x^i_j = x^{i'}_j$, where $i' = i\mod 2^{\lfloor\frac{j}{2}\rfloor + 1}$. 
Moreover, if $j$ is even, then $x^i_j = 1-x^{i''}_j$, where $i'' = i+ 2^{\lfloor\frac{j}{2}\rfloor}$.
\end{claim}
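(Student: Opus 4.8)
The plan is to prove the Claim by induction on the node index $j$, treating odd and even $j$ separately. Throughout I write $t$ for the time step (the index called $i$ in the statement) and, for a node $v_j$, I set $m=\lfloor j/2\rfloor$, so that the asserted period is $2^{m+1}$; I also write $\sigma_j$ for the number of steps in one full period at which $v_j$ holds the value $1$. The base case $j=1$ is immediate: $x^t_1=t\bmod 2$ is $2$-periodic, $\sigma_1=1$, and the flip assertion is vacuous. Before running the induction it is convenient to record two elementary reformulations of the local rules. For an even node $v_{2m}$ we have $f_{2m}(b_1,b_2)=1+b_1+b_2\pmod 2$, so $v_{2m}$ toggles its value at exactly those steps where its in-neighbour $v_{2m-1}$ currently holds $0$, and retains its value otherwise. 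For an odd node $v_{2m+1}$ the rule $f_{2m+1}(b_1,b_2)=b_1(1-b_2)$ forces $x^{t+1}_{2m+1}=0$ whenever $x^t_{2m+1}=1$, so an odd node never holds $1$ in two consecutive steps (its $1$'s are isolated). Combining these, a run of $1$'s of $v_{2m}$ can be sustained only while $v_{2m-1}$ holds $1$, and since the odd in-neighbour has isolated $1$'s, every run of $1$'s of an even node has length at most $2$.

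I would strengthen the induction hypothesis to also assert that $\sigma_j$ is \emph{odd} for every odd node $v_j$ seen so far. The even step then goes as follows. By the toggle reformulation and telescoping over $\mathbb{F}_2$, $x^{t+2^m}_{2m}=x^t_{2m}+2^m+\sum_{s=t}^{t+2^m-1}x^s_{2m-1}\pmod 2$; since $2^m$ is even and the sum ranges over one full period of $v_{2m-1}$ (of length $2^m$), the hypothesis $\sigma_{2m-1}$ odd yields the flip $x^{t+2^m}_{2m}=1-x^t_{2m}$, valid for all $t\ge 0$, and applying it twice gives pure $2^{m+1}$-periodicity with no boundary issue. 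To keep the induction alive I then re-establish the parity invariant for $v_{2m+1}$: because each run of $1$'s of $v_{2m}$ has length at most $2$, it contributes exactly one isolated $1$ to $v_{2m+1}$, so $\sigma_{2m+1}$ equals the number of runs of $1$'s of $v_{2m}$. That number is the number of rising edges, i.e. half the number of value-changes of $v_{2m}$, which by the toggle reformulation is $\tfrac12\bigl(2^{m+1}-2\sigma_{2m-1}\bigr)=2^m-\sigma_{2m-1}$, an odd number.

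For the odd step I would use that $v_{2m}$ has no three consecutive $1$'s (runs of length at most $2$) to verify algebraically that the rising-edge identity $x^t_{2m+1}=x^{t-1}_{2m}\bigl(1-x^{t-2}_{2m}\bigr)$ is preserved by the recurrence (the two sides differ by $x^{t-1}_{2m}x^{t-2}_{2m}x^{t-3}_{2m}=0$) and holds from $t=2$ onward; since the right-hand side is $2^{m+1}$-periodic in $t$, so is $x^t_{2m+1}$ for $t\ge 2$.

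The hard part, and the step I expect to be the main obstacle, is to upgrade this to \emph{pure} periodicity already from $t=0$. The rising-edge identity only forces $v_{2m+1}$ to return to its initial value $0$ after $2^{m+1}$ steps provided $v_{2m}$ does not end its period with an isolated $1$; it suffices, for instance, that $x^{2^{m+1}-2}_{2m}=1$, since otherwise $x^{2^{m+1}}_{2m+1}=1\neq0=x^0_{2m+1}$ and the odd node would acquire period $2^{m+2}$ instead of $2^{m+1}$. This boundary fact is not visible from periodicity and flip alone, as it concerns the fine structure of a node's period near its end. I therefore expect to carry one additional invariant through the induction — the pattern of the last two states of each node's period (concretely, that an odd node's period ends with two distinct values, which forces $x^{2^{m+1}-2}_{2m}=1$ for the following even node) — and to check that this pattern propagates from $v_{2m-1}$ through $v_{2m}$ to $v_{2m+1}$. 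Verifying this propagation is the delicate bookkeeping at the heart of the argument.
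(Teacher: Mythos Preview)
Your approach is genuinely different from the paper's. The paper proves the claim by brute force: it explicitly computes the period vectors $\pervec_1,\ldots,\pervec_{12}$ by hand and then shows that for $j\ge 9$ the vector $\pervec_j$ has one of four explicit shapes depending on $j\bmod 4$ (e.g., $(010)^\ell(110)^p(100)^q(10)$ for $j\equiv 0$), verifying case by case that the shape is preserved under the local rules and that periodicity and the flip follow. Your route is far more conceptual: the $\mathbb{F}_2$-telescoping for the even step is clean and correct, the rising-edge identity $x^t_{2m+1}=x^{t-1}_{2m}(1-x^{t-2}_{2m})$ is exactly the right closed form for the odd step, and the parity invariant ``$\sigma_{2m-1}$ odd'' is precisely what makes the flip work. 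What your approach buys is insight into \emph{why} the period doubles; what the paper's approach buys is that no boundary condition can be overlooked, since the whole period is written down.

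Your worry about the boundary step is overstated: the invariant you propose (an odd node's period ends in two distinct values) is correct and propagates in two lines using tools you already have. From the invariant for $v_{2m-1}$, telescoping two steps gives
\[
x^{2^m}_{2m}-x^{2^m-2}_{2m}\;\equiv\;x^{2^m-2}_{2m-1}+x^{2^m-1}_{2m-1}\;\equiv\;1\pmod 2,
\]
and since $x^{2^m}_{2m}=1$ by the flip, $x^{2^m-2}_{2m}=0$, hence $x^{2^{m+1}-2}_{2m}=1$. This immediately yields $x^{2^{m+1}}_{2m+1}=x^{2^{m+1}-1}_{2m}\,(1-x^{2^{m+1}-2}_{2m})=0$, closing the pure periodicity of $v_{2m+1}$. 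For the propagation of the invariant to $v_{2m+1}$, observe that $x^{2^{m+1}-2}_{2m}=1$ together with your run-length bound (no three consecutive $1$'s at $v_{2m}$) forces at least one of $x^{2^{m+1}-3}_{2m},x^{2^{m+1}-4}_{2m}$ to be $0$; plugging either case into the rising-edge identity at $t=2^{m+1}-1$ and $t=2^{m+1}-2$ shows the last two values of $v_{2m+1}$ are distinct. So the ``delicate bookkeeping'' you flag is not a structural obstacle, and your strengthened induction goes through.
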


\iflong
\begin{claimproof}
\fi
\ifshort
\begin{claimproofsketch}
\fi
Let us first introduce some notation that will help with exposition. For a node $v_j$, we will let the \emph{period vector} for the node $v_j$ be the vector $\pervec_j = (x^0_jx^1_j\ldots x^{2^{\lfloor\frac{j}{2}\rfloor + 1}-1}_j)$ of length $2^{\lfloor\frac{j}{2}\rfloor + 1}$. For simplicity, we mostly split the vector into smaller pieces consisting of at most three entries. Moreover, we use the exponent for a piece to specify how many times the same piece repeats in a row in the vector. For example, we could write the vector $(01001011)$ as $(010)^2(11)$.  
The reason we call these the ``period vectors'' is that we will show that for all $i\in \mathbb{N}$ we have $x^i_j = x^{i'}_j$, where $i' = i\mod 2^{\lfloor\frac{j}{2}\rfloor + 1}$; in other words, the period vector repeats cyclically as the state $x^{i}_j$ of the node $v_j$ starting from $x^0_j$. 

Let us proceed by computing the few first period vectors. This is straightforward, as we are always computing $\pervec_{j+1}$ from $\pervec_{j}$ using the fact that $x^0_{j+1}=0$. In each of the cases, we also make sure that the period vector $\pervec_{j}$ indeed repeats as the state changes of node $v_j$. Since the length of the period vector $\pervec_{j+1}$ is either the same as the period vector $\pervec_{j}$ or double the length of period vector $\pervec_{j}$, it suffices to verify that $x^{2^{\lfloor\frac{j}{2}\rfloor + 1}} = 0$ (assuming that we did the check for $\pervec_{j}$ already). Moreover, we also check that for even $j$ we have  $x^i_j = 1-x^{i''}_j$, where $i'' = i+ 2^{\lfloor\frac{j}{2}\rfloor}$.
\begin{itemize}
\item $\pervec_1 = (01)$;
\item $\pervec_2 = (01)(10) = (011)(0)$;
\item $\pervec_3 = (001)(0)$;
\item $\pervec_4 = (010)(0)(101)(1)=(010)^2(11)$;
\item $\pervec_5 = (001)^2(01)$;
\item $\pervec_6 = (010)^2(01)(101)^2(10) = (010)^2(011)^3(0)$;
\item $\pervec_7 = (001)^5(0)$;
\item $\pervec_8 = (010)^5(0)(101)^5(1)= (010)^6(110)^4(11)$;
\item $\pervec_9 = (001)^6(010)^4(01)$;
\item $\pervec_{10} = (010)^6(011)^4(01)(101)^6(100)^4(10)$ \\ $~~~~~~~~~ = (010)^6(011)^{11}(001)^4(0)$;
\item $\pervec_{11} = (001)^{17}(000)(100)^3(1)$;
\item $\pervec_{12} = (010)^{17}(010)(110)^3(1)(101)^{17}(101)(001)^3(0)$ 
\\ $~~~~~~~~~= (010)^{18}(110)^{21}(100)^3(10)$;
\end{itemize}
The reason we computed the first $12$ entries is that we will show that from here onward, the structure of the period vectors for $v_j$ will begin to follow a cyclic pattern. Namely, we will distinguish the remaining nodes by $(j\mod 4)$ and show that the structure of their period vectors is the same as the structure of period vectors for $v_9$, $v_{10}$, $v_{11}$, and $v_{12}$ respectively. 
More precisely, we show that for $j \ge 9$:

\begin{itemize}
\item If $j=4k$, $\pervec_j=(010)^\ell (110)^p(100)^q(10)$ for some $\ell, p, q \in \mathbb{N}$ such that $\ell + p + q = \frac{2^{2k+1}-2}{3}.$

\item If $j=4k+1$, $\pervec_j=(001)^p(010)^q(01)$ for some $p, q \in \mathbb{N}$ such that $p + q = \frac{2^{2k+1}-2}{3}.$

\item If $j=4k+2$, $\pervec_j=(010)^\ell (011)^p(001)^q(0)$ for some $\ell, p, q \in \mathbb{N}$ such that $\ell + p + q = \frac{2^{2k+2}-1}{3}.$

\item If $j=4k+3$, $\pervec_j=(001)^p(000)(100)^q(1)$ for some $p, q \in \mathbb{N}$ such that $p + q = \frac{2^{2k+2}-4}{3}.$

\end{itemize}

\ifshort
To complete the proof of the claim, it suffices to verify the period vectors have the stated structure and that each of the cases satisfies the stated conditions.
\fi
\iflong
While proving the structure of the period vector in each case, we will simultaneously show that the claim holds in each of the cases. Moreover, note that the structural claim about $\pervec_j$ holds for $j=12$.  
We distinguish the cases on the remainder of $j$ modulo four.

\paragraph{Case $j=4k$ for $k\ge 3$.} 
Let us assume that
$\pervec_j=(010)^\ell (110)^p(100)^q(10)$ for some $\ell, p, q \in \mathbb{N}$ such that $\ell + p + q = \frac{2^{2k+1}-2}{3}.$
We know that $f_{j+1}(b_1,b_2) = b_1\cdot (1-b_2)$. Moreover $x^0_{j+1} = 0$. Before we prove the structure for the case $j+1=4k+1$ more formally, let us write the two structures under each other in a suggestive way such that it is rather straightforward to verify the structure of the period vector $\pervec_{j+1}$ from $\pervec_{j+1}$, fact that $x^0_{j+1}=0$, and the function $f_{j+1}$.
\begin{align*}
\pervec_{4k} &\rightarrow (010)^\ell (110)^p(100)^q(10)\\ 
\pervec_{4k+1} &\rightarrow (001)^\ell (010)^p (010)^q(01)
\end{align*}
Now, we discuss this a bit more formally.
Let us first check how the state of $v_{j+1}$ changes, when the state of $v_j$ is in the first part where it changes as $(010)^\ell$ staring from $x^0_j=0$ and $x^0_{j+1}=0$. We get  $x^1_{j+1} = f_{j+1}(x^0_j, x^0_{j+1})= f_{j+1}(0,0) = 0$, $x^2_{j+1} = f_{j+1}(1,0) = 1$ and $x^3_{j+1} = f_{j+1}(0,1) = 0$. That is from the first $(010)$ for $v_j$ we get $(001)$ and in addition we get that the next triple starts again with $0$. It follows that $\pervec_{j+1}$ starts with $(001)^\ell$ and $x^{3\ell}_{j+1}=0$. Given this, let us see how $x^i_{j+1}$ changes on the
$(110)^p$. We have $f_{j+1}(1,0)=1$, $f_{j+1}(1,1)=0$, and $f_{j+1}(0,0)=0$. Hence we get $(010)$ on the first $(110)$ with addition that $x^{3\ell+3}_{j+1}=0$, hence we get additional $(010)^p$ to $\pervec_{j+1}$. For $(100)$, when the first state of $v_{j+1}$ is $0$, we get $f_{j+1}(1,0)= 1$, $f_{j+1}(0,1)= 0$, $f_{j+1}(0,0)= 0$ . Hence, we get $(010)$ for the first $(100)$ with the next state of $v_{j+1}$ being again $0$. Therefore, we get additional $(010)^q$. Finally for $(10)$ we get $f_{j+1}(1,0)=1$ which leads to final $(01)$ to the period vector $\pervec_{j+1}$ and $\pervec_{j+1}= (001)^\ell(010)^{p+q}(01)$. Finally, $f(0,1)=0$ ensures that $x^{2^{2k+1}}_{j+1}=x^0_{j+1}$ and since, by induction hypothesis, the period vector for $v_j$ repeats, so does the vector for $v_{j+1}$. 

\paragraph{Case $j=4k+1$ for $k\ge 3$.} Let us assume that
$\pervec_j=(001)^p(010)^q(01)$ for some $p, q \in \mathbb{N}$ such that $p + q = \frac{2^{2k+1}-2}{3}.$
We know that $f_{j+1}(b_1,b_2) = (b_1-b_2+1)\cdot (b_2-b_1+1)$. Moreover $x^0_{j+1} = 0$. Let us again write down how the period vector $\pervec_{j}$ influenced the period vector $\pervec_{j+1}$.
\begin{align*}
\pervec_{4k+1} &\rightarrow (001)^p(010)^q(01)\mid (001)^p(010)^q(01) \\ 
\pervec_{4k+2} &\rightarrow (010)^p(011)^q(01)\mid (101)^p(100)^q(10) \\
\pervec_{4k+2} &\rightarrow (010)^p(011)^q(01~~~~~1) (011)^p(001)^q(0)
\end{align*} 
Note that $f_{j+1}(1,1)=1$, hence $x^{2^{2k+1}}_{j+1}=1$ and we needed to repeat the period vector for $v_j$ twice to get the period vector for $v_{j+1}$. Moreover, we can see that for all $i\in \{0,\ldots, 2^{2k+1}-1\}$ it holds that $x^{i}_{j+1} = 1- x^{i+2^{2k+1}}_{j+1}$. However, $f_{j+1}(1,0)=0$ and $x^{2^{2k+2}}_{j+1}=x^0_{j+1}=0$ and it follows that the period vector for $v_{j+1}$ repeats each $2^{2k+2}$ steps. Checking that the transition outlined above is correct is rather straightforward and similar to the case $j=4k$. 

\paragraph{Case $j=4k+2$ for $k\ge 3$.} Assume that $\pervec_j=(010)^\ell (011)^p(001)^q(10)$ for some $\ell, p, q \in \mathbb{N}$ such that $\ell + p + q = \frac{2^{2k+2}-1}{3}.$ Moreover, $f_{j+1}(b_1,b_2) = b_1\cdot (1-b_2)$ and $x^0_{j+1} = 0$. We get the following transition from $\pervec_{j}$ to $\pervec_{j+1}$
\begin{align*}
\pervec_{4k+2} &\rightarrow (010)^\ell(011)^p(001)(001)^{q-1}(0) \\
\pervec_{4k+3} &\rightarrow (001)^\ell(001)^p(000)(100)^{q-1}(1) \\
\end{align*} 
Since $f_{j+1}(0,1)= 0$, we get $x^{2^{2k+2}}_{j+1}=x^0_{j+1}=0$ and it follows that the period vector for $v_{j+1}$ repeats each $2^{2k+2}$ steps. Checking that the transition outlined above is correct is rather straightforward and similar to the case $j=4k$. 

\paragraph{Case $j=4k+3$ for $k\ge 3$.} Assume that $\pervec_j=(001)^p(000)(100)^q(1)$ for some $p, q \in \mathbb{N}$ such that $p + q = \frac{2^{2k+2}-1}{3}.$ Moreover, $f_{j+1}(b_1,b_2) = (b_1-b_2+1)\cdot (b_2-b_1+1)$ and $x^0_{j+1} = 0$. We get the following transition from $\pervec_{j}$ to $\pervec_{j+1}$
\begin{align*}
\pervec_{4k+3} &\rightarrow (001)^p(000)(100)^q(1)\mid (001)^p(000)(100)^q(1) \\
\pervec_{4k+4} &\rightarrow (010)^p(010)(110)^q(1)\mid (101)^p(101)(001)^q(0) \\
\pervec_{4k+4} &\rightarrow (010)^p(010)(110)^q(1~~~~~~10) (110)^p(100)^q(10)
\end{align*} 
Again, we have that $f_{j+1}(1,1)=1$, hence $x^{2^{2k+2}}_{j+1}=1$ and we needed to repeat the period vector for $v_j$ twice to get the period vector for $v_{j+1}$. Moreover, we can see that for all $i\in \{0,\ldots, 2^{2k+2}-1\}$ it holds that $x^{i}_{j+1} = 1- x^{i+2^{2k+2}}_{j+1}$. Finally, $f_{j+1}(1,0)=0$ and $x^{2^{2k+2}}_{j+1}=x^0_{j+1}=0$ and it follows that the period vector for $v_{j+1}$ repeats each $2^{2k+3}$ steps.
\end{claimproof}
\fi
\ifshort
\end{claimproofsketch}
\fi

Observe that the above claim already implies that for all $i\in [n+1]$ and $p\in \mathbb{N}$, the state of the node $v_{2i-2}$ after $2^i\cdot p$ steps is $0$ (i.e., $x^{2^i\cdot p}_{2i-2}=0$). 

Now, given the above claim on the structure of the configuration, it is also rather straightforward to show that for all $i\in [n+1]$ and all tuples $t\in \dom^{i}$, there exists $q\in \{0,\ldots, 2^i-1\}$ such that, for every $p\in \mathbb{N}$, the configuration of $\XXX^{(n)}$\ after $2^i\cdot p + q$ successor steps restricted to the nodes $v_1, v_2, v_4, \ldots, v_{2i-4}, v_{2i-2}$ is equal to $t$. First note that for $i=1$, the node $v_1$ flips always between $0$ and $1$ and the statement holds. Moreover, for $i=2$, the nodes $v_1$ and $v_2$ have the following transitions through configurations $00\rightsquigarrow 11\rightsquigarrow 01\rightsquigarrow 10\rightsquigarrow 00$. Let us assume that for some $i\in [n]$ for every tuple $t\in \dom^{i}$, there exists $q\in \{0,\ldots, 2^i-1\}$ such that for every $p\in \mathbb{N}$, the configuration of $\XXX^{(n)}$\ after $2^i\cdot p + q$ successor steps restricted to the nodes $v_1, v_2, v_4, \ldots, v_{2i-4}, v_{2i-2}$ is equal to $t$.

Let $t\in \dom^{i+1}$, we will show that there exists $q\in \{0,\ldots, 2^{i+1}-1\}$ such that for every $p\in \mathbb{N}$, the configuration of $\XXX^{(n)}$\ after $2^{i+1}\cdot p + q$ successor steps restricted to the nodes $v_1, v_2, v_4, \ldots, v_{2i-2}, v_{2i}$ is equal to $t$.
Denote by $t'$ the restriction of $t$ to the first $i$ bits. By induction hypothesis, there exists $q'\in \{0,\ldots, 2^i-1\}$ such that for every $p'\in \mathbb{N}$, the configuration of $\XXX^{(n)}$\ after $2^i\cdot p' + q'$ successor steps restricted to the nodes $v_1, v_2, v_4, \ldots, v_{2i-4}, v_{2i-2}$ is equal to $t'$. By the above claim, $$x_{2i}^{2^i\cdot p' + q'} = 1- x_{2i}^{2^i\cdot p' + q' + 2^i}.$$ It follows that for every $p\in \mathbb{N}$, $t$ appears as the restriction of the configuration to $v_1, v_2, v_4, \ldots, v_{2i-2}, v_{2i}$ after $2^{i+1}\cdot p + q$ steps, where $q$ is either equal to $q'$ or to $q'+2^i$. 
Repeating the same argument for all $t\in \dom^{i+1}$ completes the proof.
\end{proof}

With the path-gadget ready, we can proceed to establishing our first hardness result. The idea of the reduction used in the proof loosely follows that of previous work~\cite{RosenkrantzMRS21}, but uses multiple copies of the path gadget from \Cref{thm:paths} (instead of a single counting gadget that has a tournament as the network) to avoid dense substructures.
Note that the usage of the structurally simpler, but behaviourally more complicated, path-gadget requires some additional changes to the construction. 
Furthermore, compared to the earlier result of Barrett et al.~\cite{BarrettHMRRS06}, we obtain much smaller bounds on the treewidth and maximum degree (on the other hand, our reduction does not provide a bound on the bandwidth of the graph).

\thmpspace*
\iflong
\begin{proof}
\fi
\ifshort
\begin{proof}[Proof Sketch]
\fi
	Both problems are known and easily seen to be in \PSPACE~\cite{ChistikovLPT20}.
	For showing hardness, we give a reduction starting from \textsc{QBF} which is known to be \PSPACE-hard~\cite{GareyJ79} and takes as input a formula \(Q_n x_n \dots Q_1 x_1 \varphi\) where all \(Q_i\) are either \(\forall\) or \(\exists\) and $\varphi$ is a \textsc{3SAT}-formula.
	
	From this we construct a \SDSys\ \(\SSS\) which converges to one specific configuration (the all-one configuration) if and only if the given formula is true.
	
	The overall idea is for (multiple copies\footnote{all copies will be independent, have no additional in-neighbours and start at the same configuration} of) the \SDSys \(\XXX^{(n)}\) (see \Cref{thm:paths}) to iterate over all possible truth assignments of the variables \(x_1\) to \(x_n\).
	While it would be possible to simply associate every significant node \(\XXX^{(n)}\) with a variable, we will add auxiliary \emph{variable nodes} as to go through the variable assignments in a well-structured way.
	Then \emph{clause nodes} aggregate the implied truth values of the clauses of \(\varphi\).
	These clause nodes can then be used as incoming neighbours for \emph{subformula nodes} to derive the truth value of formulas \(Q_i x_i \dots Q_1 x_1 \varphi\) respectively.
	Once the truth value of \(Q_n x_n \dots Q_1 x_1 \varphi\) is determined to be true we use a \emph{control node} to set the whole \SDSys to a configuration with all entries being \(1\).
	Otherwise the copies of \(\XXX^{(n)}\) continue to loop through their period.
	We call the nodes \(v_{1}\) and \(v_{2i}\) with \(i \in [n]\) of a copy of \(\XXX^{(n)}\) \emph{significant}.
	
	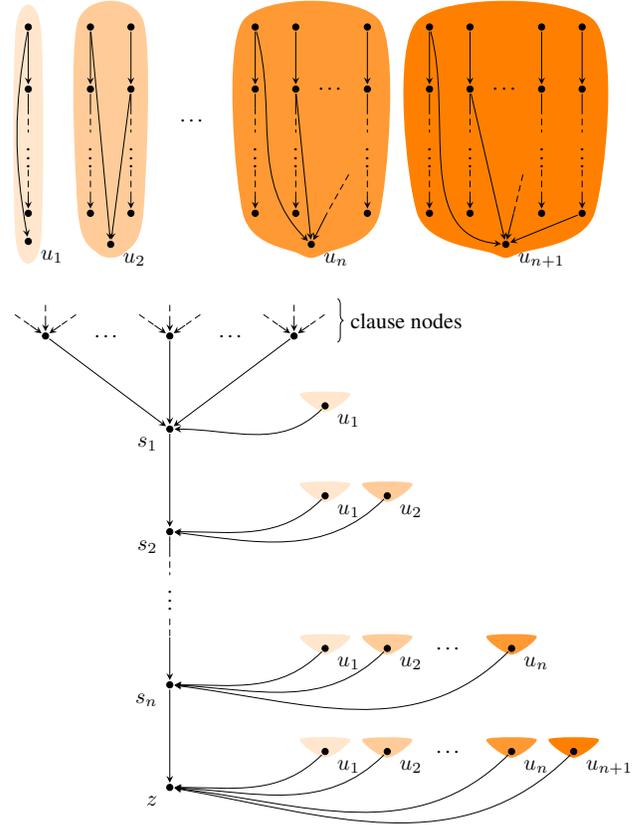
\begin{figure}[t]
		\resizebox{\columnwidth}{!}{		\begin{tikzpicture}
		\tikzstyle{vnodev}=[draw, shape=circle, inner sep=1pt, outer sep =0.75pt, fill=black]
						\node[vnodev] (v1) {};
			\node[vnodev, below of= v1] (v2) {};
			\node[below of=v2] (dots) {\(\vdots\)};
			\node[vnodev, below of=dots] (vn1) {};
			
			\draw[-stealth] (v1)--(v2);
			\draw[-] (v2)--($(dots)+(0,0.7)$);
			\draw[densely dashed] ($(dots)+(0,0.7)$) -- ($(dots)+(0,0.3)$);
			\draw[densely dashed] ($(vn1)+(0,0.7)$) -- ($(vn1)+(0,0.3)$);
			\draw[-stealth] ($(vn1)+(0,0.3)$) -- (vn1);
			
			\node[vnodev,label=-10:\(u_1\),below=0.3 of vn1] (u1) {};
			\draw[-stealth] (v1) to[out = -100, in = 100] (u1);
			\begin{scope}[on background layer]
			\fill [orange!20] plot [smooth cycle, tension=2] coordinates { ($(u1)-(0.2,0)$) ($(u1)-(-0.2,0)$) ($(v1) - (-0.2,-0)$) ($(v1) - (0.2,-0)$)};
			\end{scope}
			
						\node[vnodev] (v1') at ($(v1) + (1,0)$) {};
			\node[vnodev, below of= v1'] (v2') {};
			\node[below of=v2'] (dots') {\(\vdots\)};
			\node[vnodev, below of=dots'] (vn1') {};
			
			\node[vnodev,right=0.5 of v1'] (v1'2) {};
			\node[vnodev, below of= v1'2] (v2'2) {};
			\node[below of=v2'2] (dots'2) {\(\vdots\)};
			\node[vnodev, below of=dots'2] (vn1'2) {};
			
			\draw[-stealth] (v1')--(v2');
			\draw[-] (v2')--($(dots')+(0,0.7)$);
			\draw[densely dashed] ($(dots')+(0,0.7)$) -- ($(dots')+(0,0.3)$);
			\draw[densely dashed] ($(vn1')+(0,0.7)$) -- ($(vn1')+(0,0.3)$);
			\draw[-stealth] ($(vn1')+(0,0.3)$) -- (vn1');
			
			\draw[-stealth] (v1'2)--(v2'2);
			\draw[-] (v2'2)--($(dots'2)+(0,0.7)$);
			\draw[densely dashed] ($(dots'2)+(0,0.7)$) -- ($(dots'2)+(0,0.3)$);
			\draw[densely dashed] ($(vn1'2)+(0,0.7)$) -- ($(vn1'2)+(0,0.3)$);
			\draw[-stealth] ($(vn1'2)+(0,0.3)$) -- (vn1'2);
			
			\node[vnodev,label=-10:\(u_2\)] (u2) at ($(vn1')!0.5!(vn1'2) - (0, 0.5)$) {};
			\draw[-stealth] (v1')--(u2);
			\draw[-stealth] (v2'2)--(u2);
			\begin{scope}[on background layer]
			\fill [orange!40] plot [smooth cycle, tension=.5] coordinates { ($(u2)-(0.2,0.15)$) ($(u2)-(-0.2,0.15)$) ($(vn1'2) - (-0.2,0)$) ($(v1'2) - (-0.2,-0.1)$) ($(v1') - (0.2,-0.1)$) ($(vn1') - (0.2,0)$)};
			\end{scope}
			
						\node[vnodev] (v1'') at ($(v1'2) + (2,0)$) {};
			\node[vnodev, below of= v1''] (v2'') {};
			\node[below of=v2''] (dots'') {\(\vdots\)};
			\node[vnodev, below of=dots''] (vn1'') {};
			
			\node[vnodev,right=0.5 of v1''] (v1''2) {};
			\node[vnodev, below of= v1''2] (v2''2) {};
			\node[below of=v2''2] (dots''2) {\(\vdots\)};
			\node[vnodev, below of=dots''2] (vn1''2) {};
			
			\node[vnodev,right= of v1''2] (v1''n) {};
			\node[vnodev, below of= v1''n] (v2''n) {};
			\node[below of=v2''n] (dots''n) {\(\vdots\)};
			\node[vnodev, below of=dots''n] (vn1''n) {};
			
			\draw[-stealth] (v1'')--(v2'');
			\draw[-] (v2'')--($(dots'')+(0,0.7)$);
			\draw[densely dashed] ($(dots'')+(0,0.7)$) -- ($(dots'')+(0,0.3)$);
			\draw[densely dashed] ($(vn1'')+(0,0.7)$) -- ($(vn1'')+(0,0.3)$);
			\draw[-stealth] ($(vn1'')+(0,0.3)$) -- (vn1'');
			
			\draw[-stealth] (v1''2)--(v2''2);
			\draw[-] (v2''2)--($(dots''2)+(0,0.7)$);
			\draw[densely dashed] ($(dots''2)+(0,0.7)$) -- ($(dots''2)+(0,0.3)$);
			\draw[densely dashed] ($(vn1''2)+(0,0.7)$) -- ($(vn1''2)+(0,0.3)$);
			\draw[-stealth] ($(vn1''2)+(0,0.3)$) -- (vn1''2);
			
			\draw[-stealth] (v1''n)--(v2''n);
			\draw[-] (v2''n)--($(dots''n)+(0,0.7)$);
			\draw[densely dashed] ($(dots''n)+(0,0.7)$) -- ($(dots''n)+(0,0.3)$);
			\draw[densely dashed] ($(vn1''n)+(0,0.7)$) -- ($(vn1''n)+(0,0.3)$);
			\draw[-stealth] ($(vn1''n)+(0,0.3)$) -- (vn1''n);
			
			\node[vnodev,label=-10:\(u_{n}\)] (un) at ($(vn1'')!0.5!(vn1''n) - (0, 0.5)$) {};
			\draw[-stealth] (v1'') to [out=-75,in=135] (un);
			\draw[-stealth] (v2''2)--(un);
			\draw[densely dashed] ($(dots''n) - (0.3,0.375)$)--($(dots''n) - (0.65,1)$);
			\draw[-stealth] ($(dots''n) - (0.65,1)$)--(un);
			\begin{scope}[on background layer]
				\fill [orange!80] plot [smooth cycle, tension=.2] coordinates { ($(un)-(0.2,0.15)$) ($(un)-(-0.2,0.15)$) ($(vn1''n) - (-0.2,0)$) ($(v1''n) - (-0.2,-0.1)$) ($(v1'') - (0.2,-0.1)$) ($(vn1'') - (0.2,0)$)};
			\end{scope}
		
		\node at ($(v2''2)!0.5!(v2''n)$) {\dots};
		
				\node[vnodev] (v1''') at ($(v1''n) + (1,0)$) {};
		\node[vnodev, below of= v1'''] (v2''') {};
		\node[below of=v2'''] (dots''') {\(\vdots\)};
		\node[vnodev, below of=dots'''] (vn1''') {};
		
		\node[vnodev,right=0.5 of v1'''] (v1'''2) {};
		\node[vnodev, below of= v1'''2] (v2'''2) {};
		\node[below of=v2'''2] (dots'''2) {\(\vdots\)};
		\node[vnodev, below of=dots'''2] (vn1'''2) {};
		
		\node[vnodev,right= of v1'''2] (v1'''n) {};
		\node[vnodev, below of= v1'''n] (v2'''n) {};
		\node[below of=v2'''n] (dots'''n) {\(\vdots\)};
		\node[vnodev, below of=dots'''n] (vn1'''n) {};
		
		\node[vnodev,right=0.5 of v1'''n] (v1'''n1) {};
		\node[vnodev, below of= v1'''n1] (v2'''n1) {};
		\node[below of=v2'''n1] (dots'''n1) {\(\vdots\)};
		\node[vnodev, below of=dots'''n1] (vn1'''n1) {};
		
		\draw[-stealth] (v1''')--(v2''');
		\draw[-] (v2''')--($(dots''')+(0,0.7)$);
		\draw[densely dashed] ($(dots''')+(0,0.7)$) -- ($(dots''')+(0,0.3)$);
		\draw[densely dashed] ($(vn1''')+(0,0.7)$) -- ($(vn1''')+(0,0.3)$);
		\draw[-stealth] ($(vn1''')+(0,0.3)$) -- (vn1''');
		
		\draw[-stealth] (v1'''2)--(v2'''2);
		\draw[-] (v2'''2)--($(dots'''2)+(0,0.7)$);
		\draw[densely dashed] ($(dots'''2)+(0,0.7)$) -- ($(dots'''2)+(0,0.3)$);
		\draw[densely dashed] ($(vn1'''2)+(0,0.7)$) -- ($(vn1'''2)+(0,0.3)$);
		\draw[-stealth] ($(vn1'''2)+(0,0.3)$) -- (vn1'''2);
		
		\draw[-stealth] (v1'''n)--(v2'''n);
		\draw[-] (v2'''n)--($(dots'''n)+(0,0.7)$);
		\draw[densely dashed] ($(dots'''n)+(0,0.7)$) -- ($(dots'''n)+(0,0.3)$);
		\draw[densely dashed] ($(vn1'''n)+(0,0.7)$) -- ($(vn1'''n)+(0,0.3)$);
		\draw[-stealth] ($(vn1'''n)+(0,0.3)$) -- (vn1'''n);
		
		\draw[-stealth] (v1'''n1)--(v2'''n1);
		\draw[-] (v2'''n1)--($(dots'''n1)+(0,0.7)$);
		\draw[densely dashed] ($(dots'''n1)+(0,0.7)$) -- ($(dots'''n1)+(0,0.3)$);
		\draw[densely dashed] ($(vn1'''n1)+(0,0.7)$) -- ($(vn1'''n1)+(0,0.3)$);
		\draw[-stealth] ($(vn1'''n1)+(0,0.3)$) -- (vn1'''n1);
		
		\node[vnodev,label=-10:\(u_{n + 1}\)] (un1) at ($(vn1''')!0.5!(vn1'''n1) - (0, 0.5)$) {};
		\draw[-stealth] (v1''') to [out=-75,in=175] (un1);
		\draw[-stealth] (v2'''2)--(un1);
		\draw[densely dashed] ($(dots'''n) - (0.3,0.375)$)--($(dots'''n) - (0.45,1)$);
		\draw[-stealth] ($(dots'''n) - (0.45,1)$)--(un1);
		\draw[-stealth] (vn1'''n1)--(un1);
		\begin{scope}[on background layer]
			\fill [orange!100] plot [smooth cycle, tension=0.2] coordinates { ($(un1)-(0.2,0.15)$) ($(un1)-(-0.2,0.15)$) ($(vn1'''n1) - (-0.2,0)$) ($(v1'''n1) - (-0.2,-0.1)$) ($(v1''') - (0.2,-0.1)$) ($(vn1''') - (0.2,0)$)};
		\end{scope}
	
		\node at ($(v2'''2)!0.5!(v2'''n)$) {\dots};

			\node at ($(vn1'2)!0.5!(v1'')$) {\dots};
			
						\node[vnodev] (c2) at ($(u1)!0.5!(un) - (0, 1.5)$) {};
			\node[vnodev] (c1) at ($(c2) - (2, 0)$) {};
			\node[vnodev] (c3) at ($(c2) + (2, 0)$) {};
			
			\node at ($(c1)!0.5!(c2)$) {\dots};
			\node at ($(c2)!0.5!(c3)$) {\dots};
			
			\draw[-stealth] ($(c1)+(145:0.3)$)--(c1);
			\draw[densely dashed] ($(c1)+(145:0.6)$)--($(c1)+(145:0.3)$);
			\draw[-stealth] ($(c1)+(90:0.25)$)--(c1);
			\draw[densely dashed] ($(c1)+(90:0.5)$)--($(c1)+(90:0.25)$);
			\draw[-stealth] ($(c1)+(35:0.3)$)--(c1);
			\draw[densely dashed] ($(c1)+(35:0.6)$)--($(c1)+(35:0.3)$);
			
			\draw[-stealth] ($(c2)+(145:0.3)$)--(c2);
			\draw[densely dashed] ($(c2)+(145:0.6)$)--($(c2)+(145:0.3)$);
			\draw[-stealth] ($(c2)+(90:0.25)$)--(c2);
			\draw[densely dashed] ($(c2)+(90:0.5)$)--($(c2)+(90:0.25)$);
			\draw[-stealth] ($(c2)+(35:0.3)$)--(c2);
			\draw[densely dashed] ($(c2)+(35:0.6)$)--($(c2)+(35:0.3)$);
			
			\draw[-stealth] ($(c3)+(145:0.3)$)--(c3);
			\draw[densely dashed] ($(c3)+(145:0.6)$)--($(c3)+(145:0.3)$);
			\draw[-stealth] ($(c3)+(90:0.25)$)--(c3);
			\draw[densely dashed] ($(c3)+(90:0.5)$)--($(c3)+(90:0.25)$);
			\draw[-stealth] ($(c3)+(35:0.3)$)--(c3);
			\draw[densely dashed] ($(c3)+(35:0.6)$)--($(c3)+(35:0.3)$);
			
			\draw [decorate,
			decoration = {brace}] ($(c3) + (0.7,0.6)$) -- node[midway,right] {\ clause nodes} ($(c3) + (0.7,-0.1)$);
			
						\node[vnodev,label=190:\(s_1\)](s1) at ($(c2) - (0,1.5)$) {};
			\node[vnodev,below=1.5cm of s1,label=190:\(s_2\)](s2) {};
			\node[below of=s2] (vdots) {\(\vdots\)};
			\node[vnodev,below=1cm of vdots,label=190:\(s_n\)](sn) {};
			
			\draw[-stealth] (c1)--(s1);
			\draw[-stealth] (c2)--(s1);
			\draw[-stealth] (c3)--(s1);
			
			\draw[-stealth] (s1)--(s2);
			\draw (s2)--($(vdots)+(0,0.7)$);
			\draw[densely dashed] ($(vdots)+(0,0.7)$)--($(vdots)+(0,0.3)$);
			\draw[densely dashed] ($(vdots)-(0,0.4)$)--($(vdots)-(0,0.7)$);
			\draw[-stealth] ($(vdots)-(0,0.7)$)--(sn);
			
			\node[vnodev,below=1.5cm of sn,label=190:\(z\)](z) {};
			\draw[-stealth] (sn)--(z);
			
			\node[vnodev,label=-10:\(u_1\)](u1s1) at ($($(c2)!0.75!(s1)$) + (2.5,0)$) {};
			\begin{scope}[on background layer]
			\fill [orange!20] plot [smooth cycle, tension=.5] coordinates { ($(u1s1)-(0,0.1)$) ($(u1s1) - (0.4,-0.2)$) ($(u1s1) - (-0.4,-0.2)$)};
			\end{scope}
			\node[vnodev,label=-10:\(u_1\)](u1s2) at ($($(s1)!0.65!(s2)$) + (2.5,0)$) {};
			\begin{scope}[on background layer]
			\fill [orange!20] plot [smooth cycle, tension=.5] coordinates { ($(u1s2)-(0,0.1)$) ($(u1s2) - (0.4,-0.2)$) ($(u1s2) - (-0.4,-0.2)$)};
			\end{scope}
			\node[vnodev,label=-10:\(u_2\),right of=u1s2](u2s2) {};
			\begin{scope}[on background layer]
			\fill [orange!40] plot [smooth cycle, tension=.5] coordinates { ($(u2s2)-(0,0.1)$) ($(u2s2) - (0.4,-0.2)$) ($(u2s2) - (-0.4,-0.2)$)};
			\end{scope}
			\node[vnodev,label=-10:\(u_1\)](u1sn) at ($($(vdots)!0.6!(sn)$) + (2.5,0)$) {};
			\begin{scope}[on background layer]
			\fill [orange!20] plot [smooth cycle, tension=.5] coordinates { ($(u1sn)-(0,0.1)$) ($(u1sn) - (0.4,-0.2)$) ($(u1sn) - (-0.4,-0.2)$)};
			\end{scope}
			\node[vnodev,label=-10:\(u_2\),right of=u1sn](u2sn) {};
			\begin{scope}[on background layer]
			\fill [orange!40] plot [smooth cycle, tension=.5] coordinates { ($(u2sn)-(0,0.1)$) ($(u2sn) - (0.4,-0.2)$) ($(u2sn) - (-0.4,-0.2)$)};
			\end{scope}
			\node[right of=u2sn] (dotssn) {\dots};
			\node[vnodev,label=-10:\(u_n\),right of=dotssn](unsn) {};
			\begin{scope}[on background layer]
			\fill [orange!80] plot [smooth cycle, tension=.5] coordinates { ($(unsn)-(0,0.1)$) ($(unsn) - (0.4,-0.2)$) ($(unsn) - (-0.4,-0.2)$)};
			\end{scope}
			\node[vnodev,label=-10:\(u_1\)](u1z) at ($($(sn)!0.65!(z)$) + (2.5,0)$) {};
			\begin{scope}[on background layer]
			\fill [orange!20] plot [smooth cycle, tension=.5] coordinates { ($(u1z)-(0,0.1)$) ($(u1z) - (0.4,-0.2)$) ($(u1z) - (-0.4,-0.2)$)};
			\end{scope}
			\node[vnodev,label=-10:\(u_2\),right of=u1z](u2z) {};
			\begin{scope}[on background layer]
			\fill [orange!40] plot [smooth cycle, tension=4] coordinates { ($(u2z)-(0,0.1)$) ($(u2z) - (0.4,-0.2)$) ($(u2z) - (-0.4,-0.2)$)};
			\end{scope}
			\node[right of=u2z] (dotsz) {\dots};
			\node[vnodev,label=-10:\(u_n\),right of=dotsz](unz) {};
			\begin{scope}[on background layer]
			\fill [orange!80] plot [smooth cycle, tension=.5] coordinates { ($(unz)-(0,0.1)$) ($(unz) - (0.4,-0.2)$) ($(unz) - (-0.4,-0.2)$)};
			\end{scope}
			\node[vnodev,label=-10:\(u_{n+1}\),right of=unz](un1z) {};
			\begin{scope}[on background layer]
			\fill [orange!100] plot [smooth cycle, tension=8] coordinates { ($(un1z)-(0,0.1)$) ($(un1z) - (0.4,-0.2)$) ($(un1z) - (-0.4,-0.2)$)};
			\end{scope}
			
			\draw[-stealth] (u1s1) to[out=225,in=0] (s1);
			\draw[-stealth] (u1s2) to[out=225,in=0] (s2);
			\draw[-stealth] (u1sn) to[out=225,in=0] (sn);
			\draw[-stealth] (u1z) to[out=225,in=0] (z);
			\draw[-stealth] (u2s2) to[out=225,in=-7] (s2);
			\draw[-stealth] (u2sn) to[out=225,in=-3] (sn);
			\draw[-stealth] (u2z) to[out=225,in=-3] (z);
			\draw[-stealth] (unsn) to[out=225,in=-7] (sn);
			\draw[-stealth] (unz) to[out=225,in=-7] (z);
			\draw[-stealth] (un1z) to[out=225,in=-10] (z);
		\end{tikzpicture}}
		\caption{Overview of the network constructed in the proof of \Cref{thm: pspace}.
		To not clutter the figure, the arcs from \(z\) to all other nodes omitted, and subgraphs introduced for variable nodes are illustrated in detail at the top and only indicated by orange (darker shade \(\simeq\) higher index) in all other places.}\label{fig:pspace-reduction}
	\end{figure}
	
	To not obfuscate the core of the construction we initially allow an arbitrary node degree and give a description on how to reduce it to a constant at the end of the proof.
	The overall structure of the underlying graph which we construct is depicted in \Cref{fig:pspace-reduction}.
	The starting configuration will have an entry \(0\) for all nodes except the ones in the copies of \(\XXX^{(n)}\) which will instead start at configuration \(x^1\) from \Cref{thm:paths}.
	We describe our construction in a hierarchical manner, going from the most fundamental to the most complex (in the sense that they rely on the earlier ones) types of nodes.
	
	For each variable \(x_i\), a variable node \(u_i\) has an incoming arc from the \(j\)-th significant node of the \(j\)-th of \(i\) private copies of \(\XXX^{(n)}\) for all \(j \in [i]\).
	In addition we add an auxiliary node \(u_{n + 1}\) which will not actually represent a variable of \(\varphi\) itself but otherwise be treated and hence also be referred to as a variable node for convenience.
	\(u_{n + 1}\) has incoming arcs from the \(j\)-th significant node of the \(j\)-th of \(n+1\) private copies of \(\XXX^{(n)}\) for all \(j \in [n + 1]\).
	It will become apparent later that it is useful to introduce \(u_{n + 1}\) and that \(X^ {(n)}\) cycles only after \(2^{n + 1}\) time steps rather than just \(2^n\) for technical reasons.
	We define \(f_{u_i}\) to change the configuration at \(u_i\) if and only if the configuration of all of its in-neighbours is \(0\) and to leave it unchanged otherwise.
	To give some intuition for the behaviour of the variable nodes with this construction it is useful to refer to the \emph{position} of a tuple \(t \in \dom^{\{v_1\} \cup \{v_{2i} \mid i \in [n]\}}\) as the number of time steps required to reach a configuration of \(\XXX^{(n)}\) that is equal to \(t\) on the significant nodes starting from the configuration \(x^1\). 
	\Cref{thm:paths} upper-bounds the position of a tuple by \(2^{n + 1}\).

\iflong
	\begin{claim}
	\fi
	\ifshort
	\begin{claim}
	\fi
		\label{claim:nice-counter}
		At any time step, taken together the variable nodes \(u_{n + 1} \dots u_1\) encode the position of the configuration of the significant nodes of the path gadgets in binary (that are the same on different copies) in the previous time step. 	\end{claim}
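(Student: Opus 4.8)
The plan is to prove the statement by induction on the time step $\tau$, treating the variable nodes $u_{n+1},\dots,u_1$ as the digit-wheels of a binary counter and showing that at time $\tau$ their joint reading is the binary encoding of the position recorded one step earlier. The first preparatory step is to remove the apparent complication of the ``$i$ private copies.'' Since these copies are mutually independent, receive no external arcs, and are all initialized to $x^1$, they remain in identical configurations at every time step. Hence the in-neighbourhood of $u_i$ always reports the common values of the first $i$ significant nodes $v_1,v_2,v_4,\dots,v_{2i-2}$, and the update rule of $u_i$ collapses to the single statement: \emph{toggle $u_i$ exactly when these $i$ significant nodes are simultaneously $0$.}

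The second step pins down when this toggle fires, and this is where \Cref{thm:paths} does the real work. Its concluding (``in particular'') clause asserts that $v_1,v_2,v_4,\dots,v_{2i-2}$ are simultaneously $0$ precisely at the gadget-times that are multiples of $2^i$, i.e. at the one residue class modulo $2^i$ at which the all-zero restriction occurs. Translating gadget-time into system-time via the fact that the gadgets start at $x^1$, the node $u_i$ toggles along an arithmetic progression of common difference $2^i$. This is exactly the toggling schedule of the $i$-th digit of a binary counter, which flips precisely when all strictly lower-order wheels complete a full cycle. I would record this as the key lemma of the proof and verify that the toggle residues for $u_1,\dots,u_{n+1}$ are correctly nested, so that the flips never interfere.

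With these two facts in hand the induction is the expected one: the significant-node configuration moves through its opaque cycle one position per step, and the carry pattern of binary increment matches the nested toggling schedule wheel by wheel, so that if $(u_{n+1},\dots,u_1)$ encodes the relevant position at time $\tau-1$, then after one successor step it encodes the position at time $\tau$. The one-step delay in the statement is then free of charge, because $u_i(\tau)$ is by definition computed from the significant nodes at time $\tau-1$. For the base case I would use the all-zero starting values of the variable nodes together with the explicit description of $x^1$ from \Cref{thm:paths}.

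The step I expect to be the main obstacle is reconciling the clean, digit-wise behaviour of the counter with the deliberately \emph{opaque} order in which the path gadget enumerates the significant-node tuples: the individual significant nodes do not evolve like counter bits, so nothing forces the correspondence except the joint all-zero divisibility structure of \Cref{thm:paths}. I would therefore take special care to fix the exact dictionary between ``position'' and the counter reading before running the induction, accounting for the one-step delay and for the fact that $\XXX^{(n)}$ cycles with period $2^{n+1}$ rather than $2^n$ --- the precise reason the auxiliary node $u_{n+1}$ is introduced as the most significant wheel --- and to check that this dictionary is consistent with the toggle residues computed above, since an off-by-one or off-by-a-factor-of-two slip here is exactly what would break the equivalence.
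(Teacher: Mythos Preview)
Your proposal is correct and follows essentially the same approach as the paper's proof: use \Cref{thm:paths} to establish that the first $i$ significant nodes are simultaneously zero precisely at gadget-times divisible by $2^i$, deduce that $u_i$ toggles along that arithmetic progression, and recognise this as the flip pattern of the $i$-th bit of a binary counter. The paper's argument is terser---it simply states that this is ``exactly how a cyclic binary counter from $0$ to $2^{n+1}$ works''---whereas you plan an explicit induction on the time step and flag the $x^1$-versus-$x^0$ offset and the opaqueness of the gadget order; these are exactly the bookkeeping details the paper elides.
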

	\iflong
	\begin{claimproof}
		By \Cref{thm:paths}	and the choice of our starting configuration, the configuration at the \(i\) first significant nodes is all-zero exactly in multiples of \(2^i\) time steps.
		Because of the definition of the local functions this means that the configuration at \(u_i\) flips at each time step after a multiple of \(2^i\).
		Considering all variable nodes together, this is exactly how a cyclic binary counter from \(0\) to \(2^{n + 1}\) works which is the same as a counter for the position of a configuration on the significant nodes of \(\XXX^{(n)}\).
	\end{claimproof}
	\fi

	This means that the variable nodes just like the significant nodes of \(\XXX^{(n)}\) have the property that on them any shared configuration is reachable, and in addition these shared configurations are reached in the order of the binary numbers they represent.
	Clearly this construction for each \(u_i\) leads to an underlying tree and hence has treewidth \(1\).
		
	For each clause \(C = x_{j_1} \lor x_{j_2} \lor x_{j_3}\) of \(\varphi\), a clause node \(c\) has incoming arcs from private variable nodes \(u_{j_1}\), \(u_{j_2}\) and \(u_{j_3}\) and its local function is \(1\) if and only if the configuration of at least one of these in-neighbours is \(1\).
	This means we will have a copy of each variable node with its own copies of \(\XXX^{(n)}\) for each occurrence of that variable in \(\varphi\).
	The underlying graph formed by this construction is still a tree.
	
	For each \(i \in [n]\) we introduce a subformula node \(s_i\).
	\(s_1\) has incoming arcs from all clause nodes and a private variable node \(u_1\).
	For \(i \in [n] \setminus \{1\}\), \(s_i\) has incoming arcs from \(s_{i - 1}\) and private variable nodes \(u_1, \dotsc, u_i\).
	As before, this construction still results in a tree.
		
	Now we turn to the definition of the local functions for subformula nodes.
	We define \(f_{s_1}\) as follows:
	If the entry from \(u_1\) is \(1\), then \(f_{s_1}\) maps to \(1\) if and only if all entries from the clause nodes are \(1\).
	Otherwise if \(Q_1 = \exists\), \(f_{s_1}\) maps to \(1\) if and only if the entry for \(s_1\) \emph{or} all entries from the clause nodes are \(1\).
	Otherwise the entry from \(u_1\) is \(0\) and \(Q_1 = \forall\), and we let \(f_{s_1}\) map to \(1\) if and only if the entry for \(s_1\) \emph{and} all entries from the clause nodes are \(1\).	
	For \(i \in [n] \setminus \{1\}\) we define \(f_{s_i}\) as follows:
	If the incoming variable nodes in the order \(u_i \dots u_1\) are the binary encoding of \(2^{i - 1} + i - 1\) then \(f_{s_i}\) maps to the entry for \(s_{i - 1}\).
	Otherwise if the incoming variable nodes in the order \(u_i \dots u_1\) are the binary encoding of \(i - 1\) and \(Q_i = \exists\) then \(f_{s_i}\) maps to \(1\) if and only if the entry for \(s_i\) \emph{or} the entry for \(s_{i - 1}\) is \(1\).
	Otherwise if the incoming variable nodes in the order \(u_i \dots u_1\) are the binary encoding of \(i - 1\) and \(Q_i = \forall\) then \(f_{s_i}\) maps to \(1\) if and only if the entry for \(s_i\) \emph{and} the entry for \(s_{i - 1}\) is \(1\).
	Otherwise the incoming variable nodes in the order \(u_i \dots u_1\) do not encode \(2^{i - 1} + i - 1\) or \(i - 1\) and we let \(f_{s_i}\) map to the entry for \(s_i\).
	
	\ifshort
	It now remains to show (via a non-trivial inductive argument) that at time step $2^{n}+n$, the configuration at $s_n$ contains $1$ if and only if the \QBF\ input is true.
	\fi	
	\iflong
	With these local functions we can show the following.
	\begin{claim}[cf.\ {\cite[Claim~1]{RosenkrantzMRS21}}]
		Let \(i \in [n]\) and \(t\) be a tuple over \(\dom\) with an entry for each of \(\{x_{i + 1}, \dotsc, x_n\}\) which in the order \(t_{x_n} \dots t_{x_{i + 1}}\) is a binary encoding of the number \(k\).
		Then the configuration of \(s_i\) when \(u_n \dots u_1\) is the binary encoding of \((k + 1)2^{i} + i\) is equal to the truth value (i.e.\ \(1\) if \texttt{true} and \(0\) if \texttt{false}) of \(Q_i x_i \dots Q_1 x_1 \varphi(t)\) where \(\varphi(t)\) arises from \(\varphi\) by replacing each occurrence of \(x_j\) with \(j \in [n] \setminus [i]\) by its entry in \(t\).
	\end{claim}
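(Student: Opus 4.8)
The plan is to prove the claim by induction on \(i\), using \Cref{claim:nice-counter} to turn every statement of the form ``at the step when \(u_n\dots u_1\) encodes \(N\)'' into a statement about the low-order bits \(u_i\dots u_1\) (which drive \(f_{s_i}\)) and about the value of the predecessor node \(s_{i-1}\). Write \(C_j(m)=(m+1)2^{j}+j\), so that the read position for \(s_i\) is \(C_i(k)\) and the induction hypothesis concerns the positions \(C_{i-1}(m)\). For a fixed outer assignment \(t\) encoding \(k\), I would single out two \emph{control positions}: a \emph{load position} \(a_k=k2^{i}+2^{i-1}+i-1\), where \(u_i\dots u_1\) encodes \(2^{i-1}+i-1\), and an \emph{accumulation position} \(b_k=(k+1)2^{i}+i-1\), where \(u_i\dots u_1\) encodes \(i-1\). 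The two arithmetic identities \(a_k=C_{i-1}(2k)\) and \(b_k=C_{i-1}(2k+1)\) (both immediate since \((2k+1)2^{i-1}=k2^{i}+2^{i-1}\)) are the hinges of the whole argument: they say that \(s_{i-1}\) is exactly in one of its ``ready'' states at both control positions.

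The first main step is a purely combinatorial timing argument on the cyclic counter of \Cref{claim:nice-counter}. Since \(i<2^{i}\) and \(i-1<2^{i-1}\) for \(i\ge 2\), the residues \(i-1\) and \(2^{i-1}+i-1\) are distinct and the low bits \(u_i\dots u_1\) avoid both of them at every step strictly between \(a_k\) and \(b_k\); hence over this whole window \(f_{s_i}\) stays in its ``otherwise'' branch and \(s_i\) retains whatever it held after the load. Moreover the read position \(C_i(k)=b_k+1\) lands one step past \(b_k\), which is precisely the step at which the accumulation output appears (as \(f_{s_i}\) reads its in-neighbours from the preceding step). Thus it suffices to determine the load value and the accumulation value.

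The second main step applies the induction hypothesis at the two control positions. At \(a_k=C_{i-1}(2k)\) the node \(s_{i-1}\) holds \(Q_{i-1}x_{i-1}\dots Q_1x_1\,\varphi(t')\) with \(t'=2k\), whose binary encoding forces \(x_i=0\) while leaving \(x_{i+1}\dots x_n\) equal to \(t\); the load branch therefore overwrites \(s_i\) with \(Q_{i-1}\dots Q_1\varphi(x_i=0,t)\), discarding any earlier contents. At \(b_k=C_{i-1}(2k+1)\) the encoding of \(t'=2k+1\) forces \(x_i=1\) (again with outer part \(t\)), so the accumulation branch combines the stored value with \(Q_{i-1}\dots Q_1\varphi(x_i=1,t)\) using \(\lor\) if \(Q_i=\exists\) and \(\land\) if \(Q_i=\forall\). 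Reading \(s_i\) at \(C_i(k)\) then yields, with \(\square\in\{\lor,\land\}\),
\begin{align*}
&Q_{i-1}x_{i-1}\dots Q_1x_1\,\varphi(x_i=0,t)\ \square\ Q_{i-1}x_{i-1}\dots Q_1x_1\,\varphi(x_i=1,t)\\
&\qquad = Q_i x_i Q_{i-1}x_{i-1}\dots Q_1x_1\,\varphi(t),
\end{align*}
which is exactly the asserted truth value. For the base case \(i=1\) I would verify \(f_{s_1}\) by direct inspection; the only new feature is the intermediate layer of clause nodes, so the clause bundle combined by \(f_{s_1}\) reflects the variable assignment one step earlier than the control bit \(u_1\) read directly. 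Tracking this single extra delay shows the load step (carried out while \(u_1=1\)) deposits \(\varphi(x_1=0,t)\) at position \(2k+2\) and the next step deposits \(\varphi(x_1=0,t)\,\square\,\varphi(x_1=1,t)=Q_1x_1\varphi(t)\) at position \(2k+3=C_1(k)\).

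The step I expect to be the main obstacle is reconciling the apparent contradiction that the load event fires while the control bit \(u_i=1\) yet captures the \(x_i=0\) contribution (and symmetrically for accumulation). This is not an inconsistency but a manifestation of the one-step settling lag already baked into the induction hypothesis through the ``\(m+1\)'' offset in \(C_{i-1}(m)\): at \(a_k\) the physical bits show \(2k+1\) in the positions above \(u_{i-1}\), so \(u_i=1\), whereas the outer value actually \emph{represented} by the settled \(s_{i-1}\) is \(2k\), for which \(x_i=0\). Making this lag explicit, and in particular checking that the contributions picked up at \(a_k\) and at \(b_k\) pertain to the \emph{same} outer assignment \(t\) rather than to two consecutive ones, is the delicate part; the identities \(a_k=C_{i-1}(2k)\) and \(b_k=C_{i-1}(2k+1)\), together with the hold argument of the first step, are precisely what certify this alignment.
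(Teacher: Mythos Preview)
Your proposal is correct and follows essentially the same approach as the paper: induction on \(i\), identification of the two critical counter values \(2^{i-1}+i-1\) (load) and \(i-1\) (accumulate), application of the induction hypothesis at those positions via the identities \(a_k=C_{i-1}(2k)\) and \(b_k=C_{i-1}(2k+1)\), and a hold argument in between. Your write-up is in fact more carefully organised than the paper's (the \(C_j\) notation, the explicit hold window, and the discussion of the one-step lag that reconciles the control bit with the captured branch are all welcome clarifications), but there is no substantive difference in strategy.
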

	\begin{claimproof}
		This claim is already essentially shown as Claim~1 in \cite{RosenkrantzMRS21} but we reprove it in our setting and notation for self-containedness.

		The proof works by induction on \(i\).
		For the base case \(i = 1\) we want to show the correct configuration at \(s_1\) and when \(u_n \dots u_1\) encodes \((k + 1)2 + 1\).
		To determine the configuration of \(s_1\) when \(u_n \dots u_1\) encodes \((k + 1)2 + 1\) by \Cref{claim:nice-counter} we need to consider the application of the local function \(f_{s_1}\) when \(u_n \dots u_1\) encodes \((k + 1)2\).
		By the definition of \(f_{s_1}\) this means that then the configuration at \(s_1\) is derived as the as the disjunction if \(Q_1 = \exists\) and conjunction if \(Q_1 = \forall\) of the truth value of \(\varphi(a)\), where \(a\) associates \(x_n \dots x_1\) to the bits of the binary encoding of \((k + 1)2 - 1 = 2k + 1\), and the configuration at \(s_1\) when \(u_n \dots u_1\) encodes \((k + 1)2\).
		To determine the latter we have to consider the the application of the local function \(f_{s_1}\) when \(u_n \dots u_1\) encodes \((k + 1)2 - 1 = 2k + 1\).
		By the definition of \(f_{s_1}\) this means the configuration at \(s_1\) when \(u_n \dots u_1\) encodes \((k + 1)2\) is equal to the truth value of \(\varphi(b)\), where \(b\) associates \(x_n \dots x_1\) to the bits of the binary encoding of \(2k + 1 - 1 = 2k\).
		Observe that \(a\) and \(b\) correspond to \(t^{+1}\) which adds to \(t\) an entry with value \(1\) for \(x_1\) and \(t^{+0}\) which adds to \(t\) an entry with value \(0\) for \(x_1\) respectively.
		The truth value of \(Q_1 x_1 \varphi(t)\) is the disjunction of \(\varphi(t^{+0})\) and \(\varphi(t^{+1})\) if \(Q_1 = \exists\), and the conjunction of \(\varphi(t^{+0})\) and \(\varphi(t^{+1})\) if \(Q_1 = \forall\).
		This proves the base case.
		
		We now turn to the induction step.
		Similarly as before, the truth value of \(Q_{i + 1} x_{i + 1} \dots Q_1 x_1 \varphi(t)\) is the disjunction of \(Q_i x_i \dots Q_1 x_1 \varphi(t^{+0})\) and \(Q_i x_i \dots Q_1 x_1 \varphi(t^{+1})\) if \(Q_1 = \exists\), and the conjunction of \(Q_i x_i \dots Q_1 x_1 \varphi(t^{+0})\) and \(Q_i x_i \dots Q_1 x_1 \varphi(t^{+1})\) if \(Q_1 = \forall\) where \(t^{+0}\) adds an entry for \(x_{i + 1}\) with value \(0\) and \(t^{+1}\) adds an entry for \(x_{i + 1}\) with value \(1\) to \(t\).
		By induction hypothesis these are the configurations at \(s_i\) when \(u_n \dots u_1\) encodes \((2k + 1)2^i + i = 2^{i + 1}k + 2^i + i\) and \((2k + 2)2^i + i = 2^{i + 1}(k + 1) + i\) respectively.
		When \(u_n \dots u_1\) encodes \((2k + 1)2^i + i = 2^{i + 1}k + 2^i + i\) then the in-neighbours of \(s_{i + 1}\) together encode \(2^i + i = 2^{i + 1 - 1} + i + 1 - 1\).
		By the definition of \(f_{s_{i + 1}}\) then \(s_{i + 1}\) takes the configuration at \(s_i\) which by the argument above is the truth value of \(Q_i x_i \dots Q_1 x_1 \varphi(t^{+0})\).
		This configuration is retained at \(s_{i + 1}\) until \(u_n \dots u_1\) encodes \(2^{i + 1}(k + 1) + i\) because for step between \(u_n \dots u_1\) encoding \(2^{i + 1}k + 2^i + i\) and \(2^{i + 1}(k + 1) + i\), does \(x_i \dots x_1\) encode \(i\) or \(2^i + i\).
		When \(u_n \dots u_1\) encodes \(2^{i + 1}(k + 1) + i\) then the in-neighbours of \(s_{i + 1}\) together encode \(i = i + 1 - i\).
		By the definition of \(f_{s_{i + 1}}\) then \(s_{i + 1}\) takes the disjunction if \(Q_{i + 1} = \exists\) between the configuration at \(s_i\) which by the argument above is the truth value of \(Q_i x_i \dots Q_1 x_1 \varphi(t^{+1})\) and the configuration currently at \(s_{i + 1}\) which by the argument above is the truth value of \(Q_i x_i \dots Q_1 x_1 \varphi(t^{+0})\), and the conjunction of the same values if \(Q_{i + 1} = \forall\).
		This is exactly the desired behaviour described at the beginning of the induction step.
	\end{claimproof}
	
	Taking this, the \textsc{QBF} input formula is true if and only if the entry of the configuration of the \SDSys \(\SSS\) whose construction we described up to now in this proof when \(u_n \dots u_1\) are the binary encoding of	\(2^n + n\) (\(= (0 + 1)2^{n} + n\) for \(t\) being the empty tuple in the previous claim) is \(1\).

	\fi	
	In this case and only in this case we want the \SDSys to converge, and in particular interrupt the looping through different configurations of the copies of \(\XXX^{(n)}\).
	For this purpose we introduce a control node \(z\) which has an incoming arc from every node of a private copy of every variable node \(u_{n+1} \dots u_1\) and and an outgoing arc to all other nodes.
	We let \(f_z\) map to \(1\) if and only if the entries for the variable nodes encode \(2^n + n\) in binary (this is where the auxiliary node \(u_{n + 1}\) becomes important as this number cannot be encoded by the even-index nodes of only \(\XXX^{(n)}\)) and the entry for \(r_n\) is \(1\) or if the entry for \(f_z\) is already \(1\).
	We add an entry for \(f_z\) to all previously defined local functions and condition the previously defined behaviour on this entry being \(0\).
	As soon as this entry is \(1\) the local functions of all nodes should also map to \(1\).
	In this way the all-one configuration is reachable from the all-zero configuration if and only if the \textsc{QBF} input formula is true and this is the only way for the \SDSys to converge.
	
	Note that the network constructed in this manner consists of a directed tree plus the additional vertex $z$.

	\paragraph{Modification for constant in-degree.}
	To obtain a network with constant maximum in-degree, note that by \emph{currying}~\cite{Curry80} one can write every local function that takes polynomially many arguments as a polynomial-length series of functions that each take \(2\) arguments.
	In this way we can essentially replace every node with an in-degree of more than two by a tree of polynomial depth in which each node has in-degree \(2\).
	\iflong
	For simplicity, we keep the control node as in-neighbour for every node.
	Formally, we replace any node \(a\) with \(\delta^-(a) = \{b_1, \dotsc, b_\ell, z\}\) with \(\ell >2\) as follows.
	By currying, we find \(f_1, \dotsc, f_{\ell - 1}\) such that \(f_1, \dotsc, f_{\ell - 2}\) take three arguments each, \(f_{\ell - 1}\) takes four arguments and for any tuple \(t \in \dom^{\{b_1, \dotsc, b_\ell,b,z\}}\), \(f_a(t_{b_1}, \dotsc, t_{b_\ell}, t_b, t_z) = f_{\ell - 1}(f_{\ell - 2}(\dots f_1(t_{b_1}, t_{b_2}, t_z), t_{b_{\ell - 1}}, t_z), t_{b_{\ell}}, t_b, t_z)\).
	In place of \(a\) we introduce \(a_1, \dotsc, a_{\ell - 1}\) and let the in-neighbours of \(a_1\) be \(b_1\) and \(b_2\), and the in-neighbours of \(a_{i + 1}\) for \(i \in [\ell - 2] \setminus \{1\}\) be \(a_i\) and \(b_{i + 2}\).
	We let the out-neighbours of \(a_{\ell - 1}\) be the former out-neighbours of \(a\) and define \(f_{a_i}\) as \(f_i\).
	By the choice of \(f_{a_i}\) for \(i \in [\ell - 1]\), assuming synchronous computation of the configurations at all of \(a_1 \dotsc, a_{\ell - 1}\) from \(b_1 \dotsc, b_\ell\), \(a\) and \(a_{\ell - 1}\) would be functionally equivalent.
	
	However this assumption is not one we can simply make, as the configuration of \(b_1\) through \(b_\ell\) at one time step require differently long to reach \(a_{\ell - 1}\).
	To avoid asynchronous indirect arrival of the inputs from \(b_1, \dotsc, b_\ell\) we replace the arc that connects \(b_{i + 2}\) for \(i \in [\ell - 2]\) to the tree replacing \(a\) by a directed path with \(i - 1\) nodes.
	The first node on each of these paths behaves like \(a_i\) from the construction above and for the rest, let their local function copy the configuration of their respective predecessor at the previous time step.
	In this way, at time step \(t + \ell - 1\), \(a_{\ell - 1}\) is functionally equivalent to \(a\) from our old construction at time step \(t + 1\).
	This means the output at all nodes which stand for the nodes with in-degree \(\ell > 3\) from our original construction is delayed by \(\ell - 2\) time steps.
	Because \(\ell = n\) for all variable nodes, the configuration at every variable node is uniformly delayed by \(n - 2\) compared to our original construction.
	Similarly all clause nodes are delayed by \(1\) in addition to the cumulative delay of its non-\(z\) ancestors, \(s_1\) is delayed by one less than the number of clauses in addition to the cumulative delay of its non-\(z\) ancestors, and each \(s_i\) with \(i \in [n] \setminus \{1\}\) is delayed by \(i - 1\) in addition to the cumulative delay of its non-\(z\) ancestors.
	As all variable nodes remain synchronised, this delay does not impact the proof of correctness.	
	This replacement does not change the fact that the constructed network consists of a directed tree plus a single node.	
	\fi
		\end{proof}

\section{An Algorithm Using Treedepth}
In this section, we exploit the treedepth decomposition of the network to transform the instances of \reach\ and $\conv$ into equivalent ones of bounded size. Our algorithms then proceed by simulating all possible configurations of the resulting smaller network. 

Let $\mathcal{F}$ be the treedepth decomposition of $G$; without loss of generality, we may assume that $\mathcal{F}$ is a tree. We denote the subtree of $\mathcal{F}$ rooted in $v$ by $\mathcal{F}_v$. Let $G_v^*$ be the subgraph of $G$ induced by the nodes associated to $\mathcal{F}_v$ and the neighbourhood of $\mathcal F_v$ in $G$. 
Our aim is to iteratively compress nodes of $G$ from the leaves to the root to obtain a graph $G'$ with number of nodes bounded by some function of $\td(G)$.
\iflong
\begin{lemma}
\fi
\ifshort
\begin{lemma}
\fi
\label{lem: one_per_type}
Let $\III$ be an instance of $\conv$ or $\reach$ and let $\mathcal F$ be the treedepth decomposition of $G$ of height $L$. Assume that for some $v\in V(G)$, all the subtrees rooted in children of $v$ contain at most $m$ nodes. Then $\III$ can be modified in polynomial time to an equivalent instance $\III'$ with network $G'$
which has a treedepth decomposition $\mathcal F'$ of height at most $L$
such that $\mathcal F'\setminus \mathcal F'_v \subseteq \mathcal F \setminus \mathcal F_v$ and $|V(\mathcal F'_v)|\leq m^{m-1}\cdot(4^{L+m}\cdot |\dom|^{L+m+1})^m+1$.
\end{lemma}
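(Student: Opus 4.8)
The plan is to shrink $\mathcal F_v$ by exploiting the fact that, in a treedepth decomposition, the only nodes of $G$ that a child subtree $\mathcal F_c$ (for $c$ a child of $v$) can be adjacent to outside itself are the ancestors of $c$, i.e.\ the nodes on the path from the root to $v$. Write $B$ for this set of \emph{boundary} nodes, so $|B|\le L$, and note that two distinct child subtrees are never adjacent to each other, since siblings are incomparable in $\mathcal F$ while arcs of $G$ run only between $\mathcal F$-comparable nodes. Hence each child subtree interacts with the rest of the system \emph{only} through $B$, and distinct child subtrees interact with one another not at all.

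Building on this, I would define the \emph{type} of a child subtree $\mathcal F_c$ as the isomorphism class recording its shape in $\mathcal F$, the arcs inside $\mathcal F_c$ and between $\mathcal F_c$ and $B$ (with the boundary nodes identified by their depth), the local function of every node of $\mathcal F_c$, and the initial state of every such node (and, for \reach, also its target state). The crucial structural step is a \emph{synchronisation claim}: if two child subtrees have the same type, then along \emph{every} run of the \SDSys they are in the same configuration at each time step. This follows by induction on time, since both start equal (same initial states are part of the type), at each step receive identical inputs from $B$ (they attach to the same boundary nodes in the same way), and apply the same local functions to isomorphic closed in-neighbourhoods; the absence of arcs between siblings is exactly what keeps the induction clean.

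Given the synchronisation claim, the reduction keeps exactly one representative child subtree of each type and deletes all the others (whence the name of the lemma). Deleting a copy alters the inputs of those boundary nodes $a\in B$ that had in-neighbours in it, so I would simultaneously redefine each such $f_a$: because every deleted in-neighbour of $a$ is permanently synchronised with the corresponding node of the retained representative, the original argument vector of $f_a$ can be reconstructed from the states of the retained copies, with the deleted multiplicities read off the original instance and hard-coded into the new function (each new table entry is computable in polynomial time, as the in-degree of $a$ only decreases). The retained nodes of $\III'$ then follow exactly the restriction of the trajectory of $\III$, so the deleted nodes always mirror their representatives; using that a deleted node is a fixed point / on target precisely when its retained counterpart is, one obtains equivalence of $\III$ and $\III'$ for \conv\ and \reach\ simultaneously. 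Taking $\mathcal F'$ to be $\mathcal F$ restricted to the retained nodes gives a treedepth decomposition of $G'$ of height at most $L$ that is unchanged outside $\mathcal F_v$, so $\mathcal F'\setminus\mathcal F'_v\subseteq\mathcal F\setminus\mathcal F_v$ (in fact equality holds); redefining the functions of boundary nodes does not touch the forest, and deleting nodes cannot increase treedepth.

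Finally, $|V(\mathcal F'_v)|$ equals $1$ (for $v$) plus the total size of the retained representatives, which is at most $1+m\cdot(\text{number of types})$ since we keep one subtree of size at most $m$ per type. A direct count of the types—governed by the shape of a subtree on at most $m$ nodes (a Cayley-type bound supplying the $m^{m-1}$ factor after accounting for the per-subtree size $m$), the arcs among its nodes and to the at most $L$ boundary nodes ($4^{L+m}$ choices per node), and the local function together with the initial and target states of each node—yields the stated bound. I expect the main obstacle to be not this counting but the correctness of the reduction: proving the synchronisation claim and, above all, redefining the boundary functions so that the compressed system faithfully reproduces the original global dynamics on the retained nodes while remaining valid for \conv\ and \reach\ at once.
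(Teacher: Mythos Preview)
Your proposal is correct and follows essentially the same approach as the paper: define types of child subtrees via a boundary-preserving isomorphism that respects initial states and local functions, observe that equally-typed subtrees stay synchronised forever, keep one representative per type, and patch the local functions of boundary nodes so that the missing inputs are reconstructed from the retained copy (what the paper phrases as replacing each input tuple by its unique ``true extension''). The only cosmetic difference is in the treatment of \reach: you fold the target state into the type, whereas the paper leaves the type target-free and instead checks whether the original target $y$ is a true extension of its restriction $y'$, returning a trivial \NO-instance if not; both handle the situation correctly.
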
 
\iflong
\begin{proof}
\fi
\ifshort
\begin{proof}[Proof Sketch]
\fi
We say that two children $u$ and $w$ of $v$ have the same type (denoted by $u \equiv w$) if there exists an isomorphism $\phi: G_u^*\to G_w^*$ that is the identity on the neighbourhood of  \(\mathcal{F}_u\)  such that for every $z\in V(\mathcal F_u)$:
\begin{itemize}
 \item the initial states of $z$ and $\phi(z)$ are the same.
 \item $f_{\phi(z)}$ acts on $\dom^{\delta^-(\phi(z))\cup \{\phi(z)\}}$ as $f_z$ acts on $\dom^{\delta^-(z)\cup \{z\}}$  (the orderings of the neighbours of $z$ and their images agree).
\end{itemize}
In this case, the states of $z$ and $\phi(z)$ coinside at each time step.
Since the composition of two such isomorphisms results in an isomorphism with the same properties, $\equiv$ is an equivalence relation on the set of children of $v$ \ifshort with boundedly many equivalence classes\fi. 

\ifshort
Let $u$ and $w$ be the children of $v$ of the same type. Intuitively, since the configurations of the rooted subtrees $\mathcal F_u$ and $\mathcal F_v$ coincide at each time step, adapting the functions of the ancestors allows us to construct a new network which only preserves one of them.
\fi
\iflong
Let us upper-bound the number of its equivalence classes. For convenience, we will represent the isomorphisms between the rooted subtrees of children of $v$ by labelling their nodes. Namely, if $u_1,\dots u_s$ have the same type, let us fix some isomorphisms $\phi_i: G_{u_i}^*\to G_{u_{i+1}}^*$, $i\in [s-1]$, witnessing this. We label the nodes of $G_{u_1}$ arbitrarily and then for each $i\in [s-1]$ label the nodes of $G_{u_{i+1}}$ to make the labels preserved under $\phi_i$. By the Cayley's Formula (see, e.g., \cite{Shukla18}), the number of labelled trees on $m$ nodes is $m^{m-2}$. For every node $z$ of such a tree $Z$, there can be $|\dom|$ possible initial configurations. As $G \subseteq \lambda(\mathcal F)$, all the neighbours of $z$ are either its ancestors or belong to $Z$. Since $\mathcal F$ has the height of $L$, there are $2^{L+m-1}$ possibilities for the sets of in- and out-neighbours of $z$. This results in at most $2^{L+m}\cdot|\dom|^{L+m}$ options for the local function $f_z$. In total, the number of types of children of $v$ is at most $m^{m-2}\cdot(4^{L+m}\cdot |\dom|^{L+m+1})^m$.

Let $u$ and $w$ be the children of $v$ of the same type. Intuitively, since the configurations of the rooted subtrees $\mathcal F_u$ and $\mathcal F_v$ coincide at each time step, it is sufficient to preserve only one of them. Formally, let $\bar v_1, \dots, \bar v_q$ be the equivalence classes of $\equiv$ where $v_i\in \bar v_i$. Consider the graph $G'=G\setminus \bigcup_{i=1}^q \bigcup_{u_i\ne v_i} V(\mathcal F_{u_i})$ with the treedepth decomposition $\mathcal F'=\mathcal F[V(G')]$. We define the local function $f'_v$ for each node $v$ of $G'$ by shrinking its local function $f_v$ as follows. We say that the ordered tuple $\tau \in \dom^{\delta^-_{G}(v)\cup \{v\}}$ is an \emph{extension} of $\tau' \in \dom^{\delta^-_{G'}(v) \cup \{v\}}$ (or, equivalently, that $\tau'$ is a \emph{restriction} of $\tau$) if $\tau'$ can be obtained from $\tau$ by deleting the entries corresponding to the nodes that are not present in $G'$, while preserving the order. If in addition the entries of $\tau$ coincide for the nodes that belong to subtrees of children of $v$ with the same type and have the same label, we say that $\tau$ is a \emph{true extension} of $\tau'$.
Observe that for any $\tau' \in \dom^{\delta^-_{G'}(v)\cup \{v\}}$ there exists unique true extension $\tau$, we set $f_v'(\tau')=f_v(\tau)$.
Finally, we define the initial (and final, in case of $\reach$) configurations $x'$ ($y'$) as the restrictions of $x$ (and $y$ respectively) to the node set of $G'$.
In case of $\reach$, if $y$ is not a true extension of $y'$, we discard the constructed instance and return a trivial NO-instance. 
In the reduced instance, $v$ has at most $m^{m-2}\cdot(4^{L+m}\cdot |\dom|^{L+m+1})^m$ children. Since the rooted subtree of any child of $v$ has size of at most $m$, we can upper-bound $|V(\mathcal F'_v)|$ by $m^{m-1}\cdot(4^{L+m}\cdot |\dom|^{L+m+1})^m+1.$  
\fi
\end{proof}
\ifshort
Lemma \ref{lem: one_per_type} enables us to iteratively compress instances of $\conv$ or $\reach$.
\fi
\iflong
Lemma \ref{lem: one_per_type} allows us to iteratively compress instances of $\conv$ or $\reach$:

\begin{lemma} 
There exists a computable function $g:\Nat\times \Nat \to \Nat$ such that for any $L \in \Nat$ and $l\in[L]$, any instance $\III$ of $\conv$ or $\reach$ with $\td(G)=L$ can be in transformed in polynomial time into an equivalent  instance $\III'$ such that $\td(G')\le \td(G)$ and there is a treedepth decomposition of \(G'\) of height \(\td(G')\) in which any subtree rooted in a node of height $L-l+1$ has at most $g(L,l)$ nodes. 
\end{lemma}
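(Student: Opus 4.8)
The plan is to prove the statement by induction on $l$, using Lemma~\ref{lem: one_per_type} as the engine of the inductive step and letting that lemma's size bound dictate the recurrence defining $g$. Throughout, I maintain as an invariant an instance equivalent to $\III$ together with a treedepth decomposition of $G'$ of height at most $L$; this alone witnesses $\td(G')\le\td(G)=L$, so the only real work is bounding the subtree sizes. For the base case $l=1$, the nodes of height $L-l+1=L$ are exactly the leaves of $\mathcal{F}$, so every subtree rooted at such a node consists of a single node; hence $g(L,1)=1$ and no modification is needed.

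For the inductive step, suppose we already have an equivalent instance whose decomposition (still called $\mathcal{F}$, of height $\le L$) satisfies that every subtree rooted at a node of height $L-l+1$ has at most $m:=g(L,l)$ nodes. I process the nodes of height $L-l$ one at a time. For each such node $v$, its children lie at height $L-l+1$, so by the invariant all subtrees rooted in children of $v$ contain at most $m$ nodes; thus the precondition of Lemma~\ref{lem: one_per_type} is met, and a single application replaces the current instance by an equivalent one in which $|V(\mathcal{F}_v)|\le m^{m-1}\cdot(4^{L+m}\cdot|\dom|^{L+m+1})^m+1$, while keeping the decomposition height at most $L$ and leaving the decomposition unchanged outside $\mathcal{F}_v$ (this is the guarantee $\mathcal{F}'\setminus\mathcal{F}'_v\subseteq\mathcal{F}\setminus\mathcal{F}_v$). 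Accordingly I define
\[ g(L,l+1):=g(L,l)^{\,g(L,l)-1}\cdot\bigl(4^{\,L+g(L,l)}\cdot|\dom|^{\,L+g(L,l)+1}\bigr)^{g(L,l)}+1, \]
which is manifestly computable.

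The point that makes this per-node iteration sound is that distinct nodes $v,w$ of height $L-l$ are incomparable in $\mathcal{F}$, so the subtrees $\mathcal{F}_v$ and $\mathcal{F}_w$ are disjoint. Since each application of Lemma~\ref{lem: one_per_type} alters the decomposition only within $\mathcal{F}_v$, processing $v$ changes neither the structure of $\mathcal{F}_w$ nor the sizes of the subtrees rooted in children of $w$; hence the precondition for processing $w$ still holds, and any bound already established for a previously processed sibling is preserved. Chaining the applications and invoking transitivity of instance equivalence, once all height-$(L-l)$ nodes have been processed we obtain an equivalent instance in which every subtree rooted at a node of height $L-l=L-(l+1)+1$ has at most $g(L,l+1)$ nodes, completing the induction. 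As there are at most $|V(G)|$ nodes at each of at most $L$ levels, and each application of Lemma~\ref{lem: one_per_type} runs in polynomial time on an instance whose size never grows during the process, the overall transformation is polynomial.

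I expect the main obstacle to be precisely the independence argument of the third paragraph: one must be sure that the local-function rewrites which Lemma~\ref{lem: one_per_type} performs on shared ancestors of $v$ and $w$ do not disturb the \emph{type} classification of the children of $w$ or the bounds obtained for other nodes at the same level. The cleanest way to dispatch this is to invoke the lemma purely as a black box—relying on its global equivalence guarantee together with the structural containment $\mathcal{F}'\setminus\mathcal{F}'_v\subseteq\mathcal{F}\setminus\mathcal{F}_v$—rather than tracking the rewritten functions explicitly, since the type of a child $u$ depends only on the functions of nodes inside $\mathcal{F}_u$ (which are untouched when we process a node outside $\mathcal{F}_u$) and on the identity map on its neighbourhood.
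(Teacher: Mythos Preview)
Your proposal is correct and follows essentially the same approach as the paper: induction on $l$ with base case $g(L,1)=1$ and inductive step obtained by applying Lemma~\ref{lem: one_per_type} to every node of height $L-l$, yielding the identical recurrence for $g(L,l+1)$. You are, in fact, more careful than the paper in justifying why the per-node applications at a fixed level do not interfere with one another; the paper simply asserts that one ``consequently'' applies the lemma at each such node.
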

\begin{proof}
For the leaves ($l=1$), we can simply set $g(L,1)=1$. Assume that $g(L,i)$ is defined for every $i\in [l]$. Let $\III$ be the instance of $\conv$ or $\reach$ and let $\mathcal F$ be the treedepth decomposition of $G$ of depth $L$ such that any subtree rooted in a node of height $L-l+1$ has at most $m=g(L,l)$ nodes. We consequently apply Lemma \ref{lem: one_per_type} to each node of level $L-l$. In a resulting graph, a subtree rooted in any node of level $L-l$ has size at most $m^{m-1}\cdot(4^{L+m}\cdot |\dom|^{L+m+1})^m+1$.
Hence we can define $g(L,l+1)=g(L,l)^{g(L,l)-1}\cdot(4^{L+g(L,l)}\cdot |\dom|^{L+g(L,l)+1})^{g(L,l)}+1$.
\end{proof}
By setting $h(L)=g(L,L)$ for each $L\in \Nat$, we immediatedly obtain the compression procedure:
\fi
\begin{corollary}
\label{cor: kernel}
There exists a computable function $h:\Nat \to \Nat$ such that every instance $\III$ of $\conv$ or $\reach$ with $\td(G)=L$ can be transformed in polynomial time into an equivalent instance $\III'$ of the same problem with $|V(G')|\le h(L)$.\end{corollary}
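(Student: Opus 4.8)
The plan is to prove Corollary~\ref{cor: kernel} by applying Lemma~\ref{lem: one_per_type} repeatedly, sweeping the treedepth decomposition $\mathcal{F}$ one level at a time from the leaves toward the root and tracking how a per-subtree size bound evolves. As in the section preamble, I would assume without loss of generality that $\mathcal{F}$ is a tree of height $L = \td(G)$, so that nodes of height $1$ through $L$ are well-defined and the subtree rooted at the height-$1$ root is all of $G$. Concretely, I would introduce a computable function $g \colon \Nat \times \Nat \to \Nat$ given by $g(L,1) = 1$ and $g(L,l+1) = m^{m-1}\cdot(4^{L+m}\cdot |\dom|^{L+m+1})^m + 1$, where $m = g(L,l)$, and prove by induction on $l \in [L]$ the invariant: every instance $\III$ of $\conv$ or $\reach$ with $\td(G)=L$ can be transformed in polynomial time into an equivalent instance whose decomposition has height at most $L$ and in which every subtree rooted at a node of height $L-l+1$ has at most $g(L,l)$ nodes. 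The base case $l=1$ is immediate since leaf subtrees are singletons. For the step, I would process the height-$(L-l)$ nodes one by one: by the induction hypothesis each of their children sits at height $L-l+1$ and roots a subtree of size at most $m=g(L,l)$, so Lemma~\ref{lem: one_per_type} applies and compresses the subtree rooted at each such node to at most $g(L,l+1)$ nodes. Setting $h(L) = g(L,L)$, after the full sweep the single subtree rooted at the height-$1$ root is all of $G'$, giving $|V(G')| \le h(L)$.

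Correctness of this iteration rests on the locality guarantee in Lemma~\ref{lem: one_per_type}: the containment $\mathcal{F}'\setminus\mathcal{F}'_v \subseteq \mathcal{F}\setminus\mathcal{F}_v$ together with the height bound ensures that the reduction at a node $v$ alters only $\mathcal{F}_v$ and never increases the height. Hence compressing one height-$(L-l)$ node neither disturbs the already-established subtree bounds of its siblings (which live in disjoint subtrees) nor the portion of the decomposition strictly above height $L-l$, so the level-by-level bounds accumulate cleanly, and equivalence of the final instance follows by composing the per-step equivalences furnished by the lemma. For the running time, there are at most $L$ levels and at most $|V(G)|$ nodes per level, hence at most $L\cdot|V(G)|$ invocations of the polynomial-time reduction; since each invocation only shrinks the current instance, the cumulative size never exceeds the input and the whole procedure runs in time polynomial in $|\III|$.

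Given Lemma~\ref{lem: one_per_type}, the corollary is essentially a routine unwinding of the recurrence defining $g$, so the only genuinely delicate point I expect to dwell on is verifying that the successive reductions compose without interference. Precisely, I must check that when a fixed processing order reaches a height-$(L-l)$ node $v$, the hypothesis of Lemma~\ref{lem: one_per_type} still holds, i.e.\ that \emph{all} children-subtrees of $v$ are bounded by $m$ at that moment. This holds because the children of distinct nodes on the same level lie in pairwise disjoint subtrees, so finishing the entire level $L-l+1$ establishes the bound at every height-$(L-l+1)$ node simultaneously and no later compression on level $L-l$ can violate it before $v$ itself is handled. Everything beyond this bookkeeping is a direct substitution into the definition of $g$ and the final assignment $h(L)=g(L,L)$.
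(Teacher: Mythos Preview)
Your proposal is correct and follows essentially the same approach as the paper: you define the same two-argument function $g$ with $g(L,1)=1$ and the identical recurrence $g(L,l+1)=m^{m-1}\cdot(4^{L+m}\cdot|\dom|^{L+m+1})^m+1$ for $m=g(L,l)$, sweep the decomposition level by level applying Lemma~\ref{lem: one_per_type}, and set $h(L)=g(L,L)$. Your additional care in arguing non-interference of successive compressions via the locality clause $\mathcal{F}'\setminus\mathcal{F}'_v\subseteq\mathcal{F}\setminus\mathcal{F}_v$ is a welcome elaboration of a point the paper leaves implicit.
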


We are ready to prove the main theorem of the section:
\thmtreedepth*
\begin{proof}
Given an instance of $\conv$ or $\reach$ with network $G$ of treedepth $L$, we apply Corollary \ref{cor: kernel} to transform it into an equivalent instance where $G'$ has at most $h(L)$ nodes. Then $G'$ has at most $|\dom|^{h(L)}$ possible configurations. Therefore it suffices to simulate the first $|\dom|^{h(L)}$ time steps of the reduced \SDSys to solve $\conv$ or $\reach$.
\end{proof}

\section{Restricting the In-Degree}
In this final section, we turn our attention to \allconv. In particular, while one cannot hope to extend \Cref{thm: treedepth} to the \allconv\ problem due to known lower bounds~\cite{RosenkrantzMRS21}, one can observe that the reduction used there requires nodes with high in-degree. Here, we show that when we restrict the inputs by including the in-degree as a parameter in addition to treedepth, the problem becomes fixed-parameter tractable.

Let us start by showing that \allconv\ can be solved efficiently for networks without long directed paths and nodes of large in-degrees. In fact, the same argument also allows us to obtain a more efficient algorithm for \conv\ in this setting.
\begin{lemma}
$\allconv$ (or \conv) can be solved in time $|\dom|^{2(pd^p+1)} \cdot \bigoh(n^3)$ (or $|\dom|^{pd^p+1} \cdot \bigoh(n^3)$, respectively), where:
\begin{itemize}
\item $p$ is the maximum length of a directed path in the network,
\item $d$ is the maximum in-degree of the input network, and
\item $n$ is the number of nodes in the network.
\end{itemize}
\end{lemma}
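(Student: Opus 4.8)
The plan is to exploit the fact that, in a network with no long directed paths and small in-degree, the future value of any single node depends only on a small, self-contained portion of the network. For a node $v$, let $B(v)$ denote its \emph{backward cone}, i.e.\ the set containing $v$ together with every node from which $v$ is reachable by a directed walk. First I would observe that $B(v)$ is closed under taking in-neighbours: if $w\in B(v)$ and $u\in\delta^-(w)$, then prepending the arc $uw$ to a walk $w\rightsquigarrow v$ witnesses $u\in B(v)$. Consequently the local functions of all nodes in $B(v)$ read only values inside $B(v)$, so the \SDSys induced on $B(v)$ is self-contained and its trajectory from $x|_{B(v)}$ coincides, node by node, with the projection of the full trajectory onto $B(v)$. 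Next I would bound $|B(v)|$: since no directed path has length exceeding $p$, the cone is produced by at most $p$ rounds of backward expansion, each of which multiplies the frontier by at most $d$; hence $|B(v)|\le 1+\sum_{k=1}^{p}d^k\le pd^p+1$.

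The heart of the argument is a decomposition of convergence into per-node, per-cone conditions. Say that a node $v$ \emph{stabilises} from a configuration if its value sequence is eventually constant. I would prove that the whole system converges to a fixed point from a configuration $x$ if and only if every node stabilises from $x$ (the ``only if'' direction is immediate; for ``if'', once each node is individually constant from some time on, taking the maximum of the finitely many stabilisation times yields a time after which the global configuration is a fixed point reached from $x$). Combining this with the self-containedness above, $v$ stabilises from $x$ if and only if $v$ stabilises in the induced subsystem on $B(v)$ started from $x|_{B(v)}$. For \allconv\ I would then lift this to all configurations: the system converges from \emph{every} global configuration if and only if, for every node $v$, the subsystem on $B(v)$ makes $v$ stabilise from \emph{every} local configuration of $B(v)$. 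The forward direction extends an arbitrary local configuration of $B(v)$ to a global one; the backward direction restricts an arbitrary global configuration to each cone. This equivalence is what lets us replace a search over the $|\dom|^n$ global configurations by $n$ independent searches over the $\le|\dom|^{pd^p+1}$ local configurations of the cones.

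With the decomposition in hand the algorithms are routine. For \conv, for each of the $n$ nodes $v$ I would compute $B(v)$ (by a backward search) and simulate the induced subsystem from $x|_{B(v)}$. Since the subsystem has at most $|\dom|^{pd^p+1}$ configurations, its trajectory enters a cycle within $N:=|\dom|^{pd^p+1}$ steps; simulating $2N$ steps and checking whether $v$'s value ever changes in the periodic part $[N,2N]$ decides whether $v$ stabilises. The system converges exactly when all $n$ nodes stabilise, giving total time $|\dom|^{pd^p+1}\cdot\bigoh(n^3)$, the polynomial factor absorbing the $\le n$ cone computations and simulations, each step touching $\le pd^p+1\le n$ nodes. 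For \allconv\ I would, for each $v$, additionally enumerate all $\le|\dom|^{pd^p+1}$ starting local configurations of $B(v)$ and run the same $2N$-step stabilisation test for each; this contributes the extra factor $|\dom|^{pd^p+1}$, yielding $|\dom|^{2(pd^p+1)}\cdot\bigoh(n^3)$, and we answer \Yes\ precisely when every cone passes for every local configuration.

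I expect the main obstacle to be establishing the decomposition equivalence cleanly, particularly the \allconv\ version that quantifies over all configurations: one must argue carefully that global convergence is exactly the conjunction of the (independent) events that each node stabilises, and that each such event is faithfully captured by the self-contained cone subsystem rather than by the full dynamics. The remaining points---self-dependence of every node, eventual periodicity of finite-state trajectories, and the size bound on $B(v)$---are straightforward once this reduction is in place.
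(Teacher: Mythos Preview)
Your proposal is correct and follows essentially the same approach as the paper: define the backward cone $B(v)$ (the paper calls it $X_v$), observe it is closed under in-neighbours so the induced subsystem is self-contained, bound its size by $pd^p+1$, and then simulate each cone for enough steps to detect convergence, branching over all local starting configurations in the \allconv\ case. The only cosmetic difference is that the paper checks whether each cone subsystem reaches a fixed point as a whole, whereas you check whether the single node $v$ stabilises within its cone; these criteria are equivalent once taken over all $v$, and both yield the stated running times.
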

\begin{proof}
For a node $v\in V(G)$, we denote by $X_v$ the set of all $u\in V(G)$ such that $G$ contains a directed path from $u$ to $v$. Observe that to solve the instance $\III=(S,x)$ of \conv\ (or $\III=S$ of \allconv), it is sufficient to solve its restriction to every set $X_v$ (denoted $\III_v=(S_v,x_v)$ or $\III_v=S_v$ correspondingly). Let $d$ and $p$ be the maximum in-degree and length of a simple directed path in $G$ respectively, then each $X_v$ contains at most $pd^p+1$ elements. Therefore $S_v$ can have at most $|\dom|^{pd^p+1}$ different configurations. For \conv, we start from $x_v$, simulate $|\dom|^{pd^p+1}$ time steps and check whether the resulting configuration is a fixed point. We return "Yes" if and only if every $(S_v, x_v)$ reaches a fixed point. In case of \allconv, we proceed similarly, but for every $S_v$ at first branch over at most $|\dom|^{pd^p+1}$ possible starting configurations. Since the number of sets $X_v$ is $\bigoh(n)$ and the simulation of one step requires time of at most $\bigoh(n^2)$, we get the time bounds of $|\dom|^{pd^p+1} \cdot \bigoh(n^3)$ and $|\dom|^{2(pd^p+1)} \cdot \bigoh(n^3)$ for \conv\ and \allconv\ correspondingly.
\end{proof}

As an immediate corollary, we have:
\thmtreedepthindeg*

\iflong
However, bounding only the in-degrees of nodes is not sufficient to achieve tractability of the problem:
\fi
\ifshort
However, bounding only the in-degrees of nodes is not sufficient to achieve tractability of the problem. Indeed, by reducing from 3-\textsc{unSAT} we obtain:
\fi

\iflong
\begin{theorem}
\fi
\ifshort
\begin{theorem}
\fi
\allconv\ is co-\NP-hard even if $G$ is a DAG with maximum in-degree of $3$.
\end{theorem}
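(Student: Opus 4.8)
The plan is to reduce from $3$-\textsc{unSAT}, i.e.\ the problem of deciding whether a given $3$CNF formula $\varphi$ is \emph{unsatisfiable}, which is co-\NP-complete since it is the complement of $3$-\textsc{SAT}. From $\varphi$ I would build a \SDSys\ $\SSS$ whose network is a DAG of maximum in-degree $3$ with the property that $\SSS$ converges from every configuration if and only if $\varphi$ is unsatisfiable; this directly yields co-\NP-hardness of \allconv\ on such networks. The guiding observation is that, even though the network is acyclic, a local function reads the node's own previous state, so a single node can sustain an oscillation; the whole construction is arranged so that this oscillation is switched on exactly when the currently encoded assignment satisfies $\varphi$.

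First I would introduce, for each variable $x_i$ of $\varphi$, a \emph{variable node} $u_i$ with $\innbhd{u_i}=\emptyset$ and identity local function $f_{u_i}(b)=b$; thus $u_i$ keeps its initial value forever, so the restriction of the starting configuration to the variable nodes encodes a truth assignment $\alpha$. For each clause $C_j=\ell_{j,1}\lor\ell_{j,2}\lor\ell_{j,3}$ I add a \emph{clause node} $c_j$ whose (at most three) in-neighbours are the variable nodes occurring in $C_j$, with $f_{c_j}$ the disjunction of the three literals, negating the inputs that come from negated occurrences; after one step $c_j$ carries the truth value of $C_j$ under $\alpha$. I then aggregate the clause values along a directed path $g_1,\dots,g_m$ by setting $\innbhd{g_1}=\{c_1\}$ with $f_{g_1}=c_1$, and $\innbhd{g_j}=\{g_{j-1},c_j\}$ with $f_{g_j}=g_{j-1}\land c_j$ for $j\ge 2$, so that the last node $g:=g_m$ eventually stabilises to the value $\varphi(\alpha)$. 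All these nodes have in-degree at most $3$, and the construction so far is acyclic.

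The core gadget is a single \emph{oscillator node} $o$ with $\innbhd{o}=\{g\}$ and $f_o(g,o)=g\land\lnot o$. Its purpose is to exploit the self-dependence just mentioned: if $g$ is eventually constantly $1$, then from some time on $o(t{+}1)=\lnot o(t)$ and the $o$-coordinate flips forever, whereas if $g$ is eventually $0$ then $o$ is forced to $0$. Hence the forward orbit of any configuration reaches a fixed point precisely when $g$ stabilises to $0$, that is, precisely when the encoded assignment $\alpha$ falsifies $\varphi$. Putting this together: if $\varphi$ is unsatisfiable then every $\alpha$ falsifies it, so for \emph{every} starting configuration the constant variable sources drive the clause nodes, the aggregation path, $g$, and finally $o$ to constant values within $\Oh(m)$ steps and a fixed point is reached, making $\SSS$ a \Yes-instance; conversely, if $\varphi$ is satisfiable then fixing the variable nodes to a satisfying assignment (and the remaining nodes arbitrarily) yields $g\to 1$ and an eternally oscillating $o$-coordinate, so no fixed point is reachable and $\SSS$ is a \No-instance.

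The main point requiring care is the universal (``unsatisfiable'') direction, where I must argue convergence from \emph{arbitrary} starting values, including nonsensical transient values on the clause, aggregation, and oscillator nodes. This goes through because every non-variable node lies strictly downstream of the constant variable sources, so a straightforward induction along the topological order shows that each such node becomes eventually constant regardless of its initial value, and $o$ then settles once $g$ has settled to $0$. With all in-degrees bounded by $3$ and the network acyclic by construction, the equivalence established above completes the proof.
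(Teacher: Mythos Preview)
Your proposal is correct and follows essentially the same approach as the paper: a reduction from $3$-\textsc{unSAT} with constant variable sources, clause nodes, an AND-aggregation path, and a final node that oscillates exactly when the aggregated value is $1$. Your treatment is in fact slightly more explicit than the paper's, both in giving a concrete oscillator function $f_o(g,o)=g\land\lnot o$ and in spelling out why arbitrary transient values on the non-variable nodes are eventually overwritten in the unsatisfiable case.
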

\iflong
\begin{proof}
It is sufficient to prove the statement for $\dom=\{0,1\}$. In this case there is a natural correspondence between the elements 0 and 1 of the domain and the boolean values 0 and 1. We reduce from 3-\textsc{unSAT} problem where each clause has size of three as follows. Let $Y$ and $C$ be the sets of variables and clauses correspondingly. For every variable $y\in Y$, we create a node $v_y$ which will be a source in $G$. We define the local functions as constants, i.e., $f_{v_y}(0)=0$ and $f_{v_y}(1)=1$. Then, to model clause $c$ containing variables $y_1$, $y_2$ and $y_3$, we create a node $w_c$ and add arcs from $y_1$, $y_2$ and $y_3$ to $w_c$. We define $f_{w_c}$ so that the state of $w_c$ corresponds to the boolean value of $c$ when the boolean values of $y_1$, $y_2$ and $y_3$ are determined by the states of $v_{y_1}$, $v_{y_2}$ and $v_{y_3}$. In particular, the state of every $w_c$ stabilises after at most one step. Let $C=\{c_i:i\in[m]\}$. To model the conjunction of clauses, we create new nodes $u_i$, $i\in [m-1]$, add arcs from $c_1$ and $c_2$ to $u_1$ and then from $u_i$ and $c_{i+2}$ to $u_{i+1}$ for every $i\in [m-2]$. The local functions of each $u_i$ acts as a logical ``and'' of the states of its in-neighbours. Then $u_i$ stabilises after at most $i+1$ steps in state with corresponds to a conjunction of first $i+1$ clauses. We add an auxiliary node $v_0$ with the only in-neighbour $u_{m-1}$ and define it's local function so that the state of $v_0$ remains constant if and only if $u_{m-1}$ is in state 0 (and alternates otherwise). Notice that the state of $v_0$ stabilises if and only if the initial states of $v_y$, $y\in Y$, do not form a satisfying assignment. 
\end{proof}
\fi
\section{Concluding Remarks}

Our results shed new light on the complexity of the three most fundamental problems on synchronous dynamic systems. They also identify two of these---\reach\ and \conv---as new members of a rather select club of problems with a significant complexity gap between parameterizing by treewidth and by treedepth. It is perhaps noteworthy that the few known examples of this behavior are predominantly (albeit not exclusively~\cite{GutinJW16}) tied to problems relevant to AI research~\cite{GanianO18,GanianPSS20,GanianHO21}.

One question left open for future work is the exact complexity classification of \allconv\ on networks of bounded treedepth or treewidth. Indeed, while previous work~\cite{RosenkrantzMRS21} shows that the problem is co\NP-complete on DAGs, it is not clear why the problem should be included in co\NP\ on general networks (in particular, while convergence from a fixed starting state is polynomial-time checkable on DAGs, it is \PSPACE-complete on general networks). Another question that could be tackled in future work is whether Theorem~\ref{thm: treedepth_indeg} can be generalized to treewidth instead of treedepth.

\section{Acknowledgements} 
Robert Ganian and Viktoriia Korchemna acknowledge support by the Austrian Science Fund (FWF, project Y1329). 
Thekla Hamm acknowledges support by the Austrian Science Fund (FWF, project J4651).
\bibliography{references}

\end{document}